\newtheorem{proposition}{Proposition}
\newtheorem{lemma}{Lemma}
\newtheorem{assumption}{Assumption}
\newtheorem{theorem}{Theorem}
\theoremstyle{remark}
\newtheorem{remark}{Remark}
\newcommand{\overbar}[1]{\mkern 1.5mu\overline{\mkern-1.5mu#1\mkern-1.5mu}\mkern 1.5mu}
\title{A General Approach for Lookback Option Pricing under Markov Models}
\author{Gongqiu Zhang\thanks{School of Science and Engineering, The Chinese University of Hong Kong, Shenzhen, China. Email: zhanggongqiu@cuhk.edu.cn.} \and Lingfei Li\thanks{Department of Systems Engineering and Engineering Management, The Chinese University of Hong Kong, Hong Kong SAR. Email: lfli@se.cuhk.edu.hk. Corresponding author.}}
\begin{document}
	\maketitle
	
	\begin{abstract}
		We propose a very efficient method for pricing various types of lookback options under Markov models. We utilize the model-free representations of lookback option prices as integrals of first passage probabilities. We combine efficient numerical quadrature with continuous-time Markov chain approximation for the first passage problem to price lookbacks. Our method is applicable to a variety of models, including one-dimensional time-homogeneous and time-inhomogeneous Markov processes, regime-switching models and stochastic local volatility models. We demonstrate the efficiency of our method through various numerical examples. 
		
		\bigskip
		\noindent\textbf{Keywords}: lookback options, drawdown, Markov chain approximation, Gauss quadrature.
	\end{abstract}

\section{Introduction}
Lookback options are an important class of path-dependent derivatives in financial markets, and they can be monitored continuously or discretely (the maximum price is calculated from a discrete set of dates). The pricing of lookback options has been extensively studied. \cite{fusai2010lookback} provides a comprehensive review of the topic. Under the Black-Scholes model, analytical solutions for different types of continuous lookback options are derived in \cite{goldman1979path} and \cite{conze1991path}, while for discrete lookback options several numerical methods are put forth, including binomial/trinomial trees (\cite{cheuk1997currency}, \cite{babbs2000binomial}, \cite{dai2000modified}, \cite{tse2001pricing}), continuity correction (\cite{broadie1999connecting}), a numerical integration scheme based on random walk duality (\cite{aitsahlia1998random}), double-exponential fast Gauss transform (\cite{broadie2005double}) and $z$-transform (\cite{atkinson2007discrete}, \cite{green2010wiener}). For the CEV model, \cite{davydov2001pricing} price continuous lookback options by numerically inverting Laplace transform and \cite{boyle1999pricing} employ trinomial trees. Furthermore, \cite{linetsky2004lookback} proposes a spectral expansion approach that is applicable to a class of one-dimensional diffusions.
%convergence issues discussed for specific models in \cite{jiang2004convergence} and \cite{heuwelyckx2014convergence}.
For one-dimensional exponential L\'{e}vy models that can have jumps, \cite{boyarchenko2013efficient} devise a method based on Wiener-Hopf factorization for continuous lookback options, whereas \cite{petrella2004numerical}, \cite{feng2009computing} and \cite{fusai2016} develop methods based on Laplace transform, Hilbert transform, and a combination of $z$-transform and Hilbert transform for discretely monitored ones. Additionally, a numerical PDE approach based on finite element is pursued in \cite{forsyth1999finite} for some stochastic volatility models. Last but not the least, Monte Carlo simulation techniques are discussed in \cite{glasserman2013monte} for discrete lookback options.

In this paper, we propose a new computational method for pricing lookback options based on continuous-time Markov chain (CTMC) approximation for general Markov models. CTMC approximation has become a popular method for solving various option pricing problems under Markov models in recent years. See \cite{MPBarrier} and \cite{cui2021pricing} for barrier options, \cite{eriksson2015american} for American options, \cite{cai2015general}, \cite{song2018computable} and \cite{cui2018single} for Asian options, \cite{zhang2021drawdown} for maximum drawdown options, \cite{zhang2021Amerdrawdown} for American drawdown options, \cite{zhang2021Parisian} for Parisian options, and \cite{meier2021markov} for option pricing under financial models with sticky behavior. In all these papers, the original Markov model is approximated by a CTMC, and then the option price under the CTMC model is derived. We can follow this approach to derive the lookback option price under a CTMC model. However, this algorithm is inferior in terms of computational efficiency to alternatives generated by our method (see Remark \ref{remk:rect} for the explanation). In our approach, we combine CTMC approximation with numerical quadrature. We utilize the model-free representation that expresses the lookback option price as an integral of first passage probabilities. We apply a quadrature rule to discretize the integral and calculate each first passage probability by CTMC approximation. By using an efficient quadrature rule, our method can yield an efficient algorithm for pricing lookback options. Other applications of efficient quadrature rules for option pricing can be found in \cite{andricopoulos2003universal}, \cite{andricopoulos2007extending}, \cite{FusaiRecchioni}.

Our method has two nice features. First, it is applicable to very general models, including one-dimensional (1D) time-homogeneous and time-inhomogeneous Markov processes, regime-switching models and stochastic local volatility models. Second, it can generate very efficient algorithms by using efficient quadrature rules. In particular, using the Gauss-Legendre quadrature, we can obtain highly accurate results with a small or moderate number of quadrature points. In one example, we show that our method significantly outperforms the finite difference method for solving the partial differential equation for the lookback option price. 

The rest of the paper is organized as follows. Section \ref{sec:method} first reviews the model-free representations for lookback options and CTMC approximation for the first-passage problem and then presents our algorithm. Section \ref{sec:convergence} develops convergence rate analysis for our algorithm under 1D diffusion models. Section \ref{sec:numerical} provides various numerical examples to demonstrate the efficiency and convergence of our algorithm. Section \ref{sec:conclusion} concludes. 

\section{Lookback Option Pricing}\label{sec:method}
Let $X_{t}$ denote the underlying asset price at time $t$. Define $m_{t}=\inf_{0\leq u\leq t}X_{u}$, and $M_{t}=\sup_{0\leq u\leq t}X_{u}$, which are the running minimum and maximum of the price process starting from time $0$, respectively. We also consider the seasoned running minimum and maximum $\overline{m}_t=\overline{m}_0\wedge m_t$, $\overline{M}_t=\overline{M}_0\vee M_t$, where $\overline{m}_0$ and $\overline{M}_0$ are the minimum and maximum before time $0$.

We consider four types of standard lookback options and focus on continuous monitoring in this paper (see Remark \ref{remk:discrete} for how to treat discretely monitored ones in our algorithm). In the following, we consider general seasoned lookback options that mature at $T$ and price them at $t\in[0,T]$. Define $\tau:=T-t$ and let $r,d$ be the constant risk-free rate and dividend yield, respectively. \cite{davydov2001pricing} shows that their prices admit the following model-free representations:
\begin{itemize}
\item Floating-strike lookback put: 
\begin{align}
	u^{flp}(t,x,M) &= e^{-r\tau}\mathbb{E}\left[ \left( \overline{M}_{T}-X_{T}\right)
	^{+}|X_{t}=x,\overline{M}_{t}=M\right] \\
	&= e^{-r\tau}M-e^{-d\tau}x+e^{-r\tau}\int_{M}^{\infty }P_{x}\left(M_{\tau}\geq
	y\right) dy.\label{eq:fl-put}
\end{align}
It's worth noticing that the floating-strike lookback put is an option that compensates the option holder the drawdown of the asset at maturity. This type of options becomes particularly relevant in market turmoil. 

\item Floating-strike lookback call: 
\begin{align}
u^{flc}(t,x,m) &= e^{-r\tau}\mathbb{E}\left[ \left( X_{T}-\overline{m}_{T}\right)
^{+}|X_{t}=x,\overline{m}_{t}=m\right] \\
&= e^{-d\tau}x-e^{-r\tau}m+e^{-r\tau}\int_{0}^{m}P_{x}\left( m_{\tau}\leq y\right) dy.\label{eq:fl-call}
\end{align}

\item Fixed-strike lookback put: 
\begin{align}
u^{fip}(t,x,m) &= e^{-r\tau}\mathbb{E}\left[ \left( K-\overline{m}_{T}\right)
^{+}|X_{t}=x,\overline{m}_{t}=m\right] \\
&= e^{-r\tau}(K-m)^{+}+e^{-r\tau}\int_{0}^{m\wedge K}P_{x}\left(
m_{\tau}\leq y\right) dy.\label{eq:fi-put}
\end{align}

\item Fixed-strike lookback call: 
\begin{align}
	u^{fic}(t,x,M) &= e^{-r\tau}\mathbb{E}\left[ \left( \overline{M}_{T}-K\right)
	^{+}|X_{t}=x,\overline{M}_{t}=M\right] \\
	&= e^{-r\tau}(M-K)^{+}+e^{-r\tau}\int_{M\vee K}^{\infty}P_{x}\left(
	M_{\tau}\geq y\right) dy.\label{eq:fi-call}
\end{align}  
\end{itemize}
The distributions of $m_{\tau}$ and $M_{\tau}$ are essetially first passage probabilities. Using these representations, the lookback option pricing problem boils down to calculating the intergral of first passage probabilities for different passage levels. It is important to note that these probabilities are for the unseasoned running minimum and maximum, thus the seasoned problem has been turned into an unseasoned one. A general and efficient method for the first passage probability calculation is CTMC approximation, which we review in the next subsection.

\subsection{CTMC Approximation for the First Passage Problem}
Consider a 1D time-homogeneous Markov model for the asset price process $X_t$. In financial models, $X_t$ lives on the continuous state space $\mathbb{R}^+$. We can approximate $X_t$ by CTMCs and we refer readers to e.g., \cite{MPBarrier}, Section 4 or \cite{zhang2021drawdown}, Appendix A for the construction of CTMC approximation for diffusions, jump-diffusions and pure-jump models. Let  $\{Y^{(n)}_t\}$ is a sequence of CTMCs that converges weakly to $X_t$. For $Y^{(n)}_t$, it lives on the state space $\mathbb{S}^{(n)}=\{s^{(n)}_0,\cdots,s^{(n)}_n\}$ with $n+1$ grid points and $\delta_n$ is the mesh size. Hereafter, all quantities with superscript $(n)$ are defined for $Y^{(n)}_t$ in the same way as those defined for $X_t$.

For $Y^{(n)}_t$, we denote its generator matrix by $G_n$ with $G_n(x, y)$ referring to the transition rate from state $x$ to $y$. Let $p_n(t,x,y):= P(Y^{(n)}_t=y|Y^{(n)}_0=x)$. As $t\to 0$, $p_n(t,x,y)=1_{\{x=y\}}+G_n(x,y)t+o(t)$. It is a classical result that
$p_n(t,x,y)=(\exp{(G_nt)})(x,y)$ (\cite{serfozo2009basics}, Section 4.4), where $\exp{(A)}$ is the matrix exponential of matrix $A$ defined as
\begin{equation}
	\exp{(A)}=\sum_{n=0}^\infty \frac{A^n}{n!}.
\end{equation}  

Consider the first passage times of $Y^{(n)}_t$ defined as
\begin{equation}
	T_y^{(n),+(-)} = \inf\{ t\ge 0: Y^{(n)}_t >(<) y \}.
\end{equation}
Closed-form formulas for first passage probabilities under CTMCs with finite state spaces were derived in \cite{MPBarrier}. Let $G_{n,\le y}$ be the square sub-matrix of $G_n$ by keeping only transition rates among states in the space $\{z\in\mathbb{S}^{(n)}:z\le y\}$ and define $G_{n,\ge y}$ similarly. Then, we have
\begin{align}
	P_x(T_y^{(n),+}>t)&=(\exp{(G_{n,\le y}t)} {\pmb 1})(x), \label{eq:FPP-up-CTMC}\\
	P_x(T_y^{(n),-}>t)&=(\exp{(G_{n,\ge y}t)} {\pmb 1})(x). \label{eq:FPP-down-CTMC}
\end{align}
where ${\pmb 1}$ is a vector of ones. To calculate the matrix exponential in \eqref{eq:FPP-up-CTMC} and \eqref{eq:FPP-down-CTMC}, a popular algorithm is the scaling and squaring algorithm (\cite{higham2005scaling}), which has a time complexity of $O(n_{y}^3)$, where $n_{y}$ is the size of the matrix. When $Y^{(n)}_t$ is a birth-and-death process, \cite{li2016fd} propose an algorithm based on efficient matrix eigendecomposition which reduces the time complexity to $O(n_{y}^2)$. Another very efficient algorithm for matrix exponentials can be found in \cite{meier2021markov}.

The construction of CTMC approximation for 1D time-inhomogenous Markov models, regime-switching models and stochastic local volatility models and the computation of first passage probabilities using CTMC approximation under these models can be found in \cite{MPBarrierFull}, \cite{SongCaiKouRS} and \cite{cui2018general}, respectively. 

We introduce some notations. For any $x$, let
\begin{equation}
x^-:=\max\{y\in\mathbb{S}^{(n)}:y<x\},\ x^+:=\min\{y\in\mathbb{S}^{(n)}:y>x\},	
\end{equation}
which are the grid points next to $x$ on the left and right, respectively.

\subsection{The Lookback Option Pricing Algorithm}\label{sec:pricing-lb}
We explain our ideas by way of the floating-strike lookback put option. Recall the model-free representation \eqref{eq:fl-put}. We truncate the interval by replacing $\infty$ with a large number $A$ and obtain
\begin{equation}\label{eq:lbput-trunc}
	u^{flp}(t,x,M)\approx u^{flp}_A(t,x,M):=e^{-r\tau}A-e^{-d\tau}x-e^{-r\tau}\int_{M}^{A}P_{x}\left(M_{\tau}<y\right)dy.
\end{equation}
We then apply a quadrature rule on $[M,A]$, which results in
\begin{equation}\label{eq:lbput-quad}
	u^{flp}(t,x,M)\approx u^{flp}_{A,n_q}(t,x,M):=e^{-r\tau}A-e^{-d\tau} x - e^{-r\tau}\sum_{i = 0}^{n_q} \omega_i P_x\left(M_{\tau} < y_i\right),
\end{equation}
where $\{y_0,\cdots,y_{n_q}\}$ are the quadrature points on $[M,A]$ and $\omega_i$ is the weight at $y_i$. In Section \ref{sec:quadrature}, we will review several commonly used quadrature rules. 

The first passage probability $P_x(M_{\tau} < y_i)$ is unknown for a general Markov process $X_t$ and we compute it by CTMC approximation. Recall that $Y^{(n)}_t$ is a CTMC that approximates $X_t$. This leads to the following approximation:
\begin{equation}
	P_x\left(M_{\tau} < y_i\right)\approx P_x\left(M^{(n)}_{\tau} < y_i\right)=P_x\left(T_{y_i^-}^{(n), +} > \tau\right),
\end{equation}
where $y_i^-$ is left neighbor of $y_i$ on the grid $\mathbb{S}^{(n)}$. Consequently, we obtain the following approximation to the option price: 
\begin{itemize}
	\item Floating-strike lookback put:
\begin{equation}\label{eq:fl-put-approx}
	u^{flp}(t,x,M)\approx u_{A, n_q}^{flp,(n)}(t, x, M) := e^{-r\tau}A-e^{-d\tau} x - e^{-r\tau}\sum_{i = 0}^{n_q} \omega_i P_x\left(T_{y_i^-}^{(n), +} > \tau\right).
\end{equation}
\end{itemize}

Applying the same ideas to the model-free representations \eqref{eq:fl-call}, \eqref{eq:fi-put} and \eqref{eq:fi-call}, we can obtain the approximations for the prices of other types of lookback options as follows:
\begin{itemize}
	\item Floating-strike lookback call: 
	\begin{equation}
		u^{flc}(t,x,m)\approx u_{n_q}^{flc,(n)}(t, x, m) := e^{-d\tau} x - e^{-r\tau}\sum_{i = 0}^{n_q} \omega_i P_x\left(T_{y_i^+}^{(n), -} > \tau\right),\label{eq:fl-call-approx}
	\end{equation}
	where $\{y_0,\cdots,y_{n_q}\}$ are the quadrature points on $[0,m]$ and $\omega_i$ is the weight at $y_i$.
	
	\item Fixed-strike lookback put:
	\begin{equation}
		u^{fip}(t,x,m)\approx u_{n_q}^{fip,(n)}(t, x, m) = e^{-r\tau}K-e^{-r\tau}\sum_{i = 0}^{n_q} \omega_i P_x\left(T_{y_i^+}^{(n), -} > \tau\right),\label{eq:fi-put-approx}
	\end{equation}
	where $\{y_0,\cdots,y_{n_q}\}$ are the quadrature points on $[0,m\wedge K]$ and $\omega_i$ is the weight at $y_i$.
	
	\item Fixed-strike lookback call:
	\begin{equation}
		u^{fic}(t,x,M)\approx u_{A, n_q}^{fic,(n)}(t, x, M) = e^{-r\tau}A-e^{-r\tau}K - e^{-r\tau}\sum_{i = 0}^{n_q} \omega_i P_x\left(T_{y_i^-}^{(n), +} > \tau\right),\label{eq:fi-call-approx}
	\end{equation}
	where $A$ is the truncation level of the integral, $\{y_0,\cdots,y_{n_q}\}$ are the quadrature points on $[M\vee K,A]$ and $\omega_i$ is the weight at $y_i$. 
\end{itemize}

The first passage probabilities $P_x\left(T_{y_i^+}^{(n), -} > \tau\right)$ and $P_x\left(T_{y_i^-}^{(n), +} > \tau\right)$ of the CTMC $Y^{(n)}_t$ can be computed by \eqref{eq:FPP-up-CTMC} and \eqref{eq:FPP-down-CTMC} using an efficient algorithm for the matrix exponential. 

\begin{remark}[discretely monitored lookback options]\label{remk:discrete}
	If the lookback option is monitored discretely, the model-free representations are still valid, with $M_{\tau}$ and $m_{\tau}$ defined as discretely monitored running maximum and minimum. We can still apply truncation and discretization as in the continuous monitoring case, so  Eqs.\eqref{eq:fl-put-approx}, \eqref{eq:fl-call-approx}, \eqref{eq:fi-put-approx} and \eqref{eq:fi-call-approx} still hold. But the first passage probabilities in these equations are for discrete monitoring and how to calculate them for a CTMC can be found  
	in \cite{cui2021pricing}.
\end{remark}

\subsection{Grid Design}\label{sec:grid}
Grid design is essential for the CTMC method to converge nicely. \cite{zhang2019analysis} propose grid design principles for pricing European and barrier options with call/put-type and digital-type payoffs. Here, the first passage probabilities $P_x\left(M_{\tau} < y_i\right)$ and $P_x\left(m_{\tau} > y_i\right)$ can be viewed as the price of an up-and-out  and down-and-out barrier option, respectively, with barrier level $y_i$ and unit payoff. From \cite{zhang2019analysis}, it is essential to locate $y_i$ on the grid of the Markov chain to achieve fast convergence. Thus, the grid for the CTMC should be designed according to the positions of the quadrature points. A grid design that satisfies the above requirement is as follows. A uniform grid is used on each $[y_{i},y_{i+1}]$ with both end-points included on the grid as well as outside $[y_{0},y_{n_q}]$. The piecewise uniform structure has the additional benefit of removing oscillations as shown in \cite{zhang2019analysis} so that Richardson extrapolation can be applied to speed up convergence.

\subsection{The Choice of Quadrature Rules}\label{sec:quadrature}
Clearly the efficiency of our algorithm hinges on the quadrature rule as well as CTMC approximation. Below we review several popular quadrature rules. Since any integral on a finite interval can be transformed to another integral on $[0, 1]$ by a change of variable, we present these rules for computing $I = \int_0^1 f(x) dx$. Below $h=1/n$.
\begin{itemize}
	\item Rectangle rule: $I \approx \sum_{i = 0}^{n - 1} f(ih) h$ or $\sum_{i = 1}^{n} f(ih) h$.
	\item Trapezoid rule: $I \approx \sum_{i = 0}^{n - 1} \frac{h}{2} (f(ih) + f((i +1) h))$.
	\item Simpson's rule: $I \approx \frac{h}{3} \sum_{j = 1}^{n/2} \left( f((2j - 2) h) + 4f((2j - 1)h) + f(2jh) \right)$.
	\item Gauss-Legendre quadrature: $I \approx \sum_{i = 1}^{n} \omega_i f(x_i) $. The weights $\omega_i$ and abscissas $x_i$ are determined to optimize the convergence rate. For the formulas of $\omega_i$ and $x_i$, see e.g., \cite{press2007numerical}, Section 4.5.
\end{itemize}

Table \ref{tab:quadrature} displays their error bounds and convergence rates. A detailed discussion of these rules and their implementation can be found in \cite{fusai2008implementing} Chapter 6. While the errors of the first three rules decay polynomially with the Simpson's rule being the fastest, Gauss-Legendre quadrature converges faster than any order of polynomial convergence. It should be noted that the efficiency of a quadrature rule depends on the smoothness of the integrand. If the integrand is not smooth enough, then Gauss-Legendre quadrature fails to work. Fortunately, the integrands in our problems are sufficiently smooth, so we can utilize the Gauss-Legendre quadrature in our algorithm. The numerical examples in Section \ref{sec:numerical} shows that using a few quadrature points already suffices to achieve a high level of accuracy for integral discretization.

\begin{table}
	\centering
	\begin{tabular}{lcc}
		\hline
		Rule  & Error Bound & Convergence Rate \\
		\hline
		Rectangle  & $\frac{1}{2} h \max |f'(\xi)|$ & $O(n^{-1})$ \\
		Trapezoid  & $\frac{h^2}{12} \max |f''(\xi)|$ & $O(n^{-2})$ \\
		Simpson  & $\frac{h^4}{180} \max |f^{(4)}(\xi)|$ & $O(n^{-4})$ \\
		Gauss-Legendre & $\frac{(n!)^4}{(2n + 1) ((2n)!)^3} \max\left| f^{(2n)}(\xi) \right|$ & $O(n^{-2n})$ \\
		\hline
	\end{tabular}
	\caption{Some popular quadrature rules and their convergence properties.}\label{tab:quadrature}
\end{table}

\begin{remark}[rectangle rule and the lookback put price under a CTMC model]\label{remk:rect}
	To approximate the price of a lookback option, an alternative approach is to compute the option price directly under the CTMC model $Y^{(n)}_t$ that approximates $X_t$. Below using the floating-strike lookback put option as an example, we show that this approach is equivalent to using the rectangle rule in \eqref{eq:fl-put-approx}. Hence it is inferior to the algorithm using Gauss-Legendre quadrature. To simplify the discussion, we assume $Y^{(n)}_t$ lives on a uniform grid $\mathbb{S}^{(n)}$ with step size $h = (A - M) / n_q$, where $n_q$ is a given positive integer. We first apply the model-free representation \eqref{eq:fl-put} which shows
	the option price under $Y^{(n)}_t$ is given by
	\begin{equation}
		e^{-r\tau}\mathbb{E}_x \left[ \overline{M}^{(n)}_T - Y^{(n)}_T  | Y^{(n)}_t = x, \overline{M}^{(n)}_t = M\right]=e^{-r\tau}M-e^{-d\tau}x+e^{-r\tau}\int_{M}^{\infty }\left(1-P_{x}\left(M^{(n)}_{\tau}<
		y\right)\right) dy.
	\end{equation}
We then truncate the integral at level $A$. So the option price under $Y^{(n)}_t$ is approximated by  
\begin{equation}
e^{-r\tau}A-e^{-d\tau}x-e^{-r\tau}\int_{M}^{A}P_{x}\left(M^{(n)}_{\tau}<y\right)dy= e^{-r\tau}A-e^{-d\tau}x-e^{-r\tau}\sum_{i = 1}^{n_q} h P_x\left(T_{y_i^-}^{(n), +} > \tau\right).\label{eq:fl-put-rect}
\end{equation}
where $y_i = M + i h$ ($i=1,\cdots,n_q$) is in $\mathbb{S}^{(n)}$. The equality holds because
\begin{equation}
	P_{x}\left(M^{(n)}_{\tau}<y\right)=P_x\left(T_{y_i^-}^{(n), +} > \tau\right),\ y_i^-<y\leq y_i.
\end{equation}
Eq.\eqref{eq:fl-put-rect} is identical to what we would obtain with the rectangle rule applied in \eqref{eq:fl-put-approx}.
\end{remark}

\subsection{An Alternative Algorithm for Exponential L\'evy Models}
When the underlying asset price  
follows an exponential L\'evy model, we can exploit the spatial homogeneity of the L\'evy process to develop a more efficient algorithm with less time complexity. Recall that $\{X_t, t\ge0\}$ is the asset price process. Then $\{\ln(X_t), t\ge0\}$ is a L\'evy process. Below 
we use the floating strike lookback put option as an example to illustrate the idea and the other types of lookback options can be dealt with similarly. 

Let $\widetilde{M}_t = \sup_{0\le u\le t}\ln(X_u)$, $t \ge 0$. It's easy to see that $\widetilde{M}_t=\ln(M_t)$. We start with \eqref{eq:lbput-quad} and replace the first passage probilities for $X$ with those for $\tilde{X}$. Then, we obtain 
\begin{align}
	u^{flp}(t,x,M)&\approx u^{flp}_{A,n_q}(t,x,M)=e^{-r\tau}A-e^{-d\tau} x - e^{-r\tau}\sum_{i = 0}^{n_q} \omega_i P_x\left(M_{\tau} < y_i\right) \\
	&= e^{-r\tau}A-e^{-d\tau} x - e^{-r\tau}\sum_{i = 0}^{n_q} \omega_i P\left(\widetilde{M}_{\tau} < \ln y_i | \ln X_0 = \ln x\right) \\
	&= e^{-r\tau}A-e^{-d\tau} x - e^{-r\tau}\sum_{i = 0}^{n_q} \omega_i P\left(\widetilde{M}_{\tau} < 0 | \ln X_0 = \ln (x/y_i) \right) \label{eq:lbput-quad-part3} \\
	&= e^{-r\tau}A-e^{-d\tau} x - e^{-r\tau}\sum_{i = 0}^{n_q} \omega_i P_{x/y_i}\left(M_{\tau} < 1  \right) \\
	&\approx e^{-r\tau}A-e^{-d\tau} x - e^{-r\tau}\sum_{i = 0}^{n_q} \omega_i P_{x/y_i}\left(T_{1^-}^{(n), +} > \tau\right). \label{eq:lbput-quad-part4}
\end{align}
where we use the spatial homogeneity of $\ln X$ in \eqref{eq:lbput-quad-part3}. Comparing \eqref{eq:lbput-quad-part4} with \eqref{eq:fl-put-approx}, we see that the first passage probabilities of the CTMC in the former are calculated at different starting points but for the same barrier level whereas in the latter they are calculated at the same starting point but for different barrier levels. From \eqref{eq:FPP-up-CTMC}, 
\begin{equation}
	P_{x/y_i}\left(T_{1^-}^{(n), +} > \tau\right) = (\exp{(G_{n,\le 1^-}t)} {\pmb 1}_{\le 1^-})(x/y_i),\ i = 0, 1, \cdots, n_q,
\end{equation}
so we only need to calculate one matrix exponential to get all the first passage probabilities in \eqref{eq:lbput-quad-part4}. In contrast, we have to calculate $n_q + 1$ matrix exponentials to obtain the first passage probabilites in \eqref{eq:fl-put-approx}. Since doing matrix exponentiation is the most time-consuming part of our method, using \eqref{eq:lbput-quad-part4} can significantly reduce the computation time. For the grid design, we can construct a piecewise uniform grid to have $x/y_i$ for all $i = 0, 1, \cdots, n_q$ on the grid.

\section{Convergence Rate Analysis for Diffusion Models}\label{sec:convergence}
Sharp convergence rate estimates of CTMC approximation for European and barrier options under 1D time-homogeneous diffusion models are obtained in \cite{li2018error}, \cite{zhang2019analysis} and \cite{zhang2021nonsmooth} under different assumptions. For general Markov processes with jumps, sharpe convergence rate estimates for these options are still an open problem. For this reason, here we analyze the error of our algorithm for 1D time-homogeneous diffusions. We only consider the floating strike lookback put option. The other three types can be analyzed in an analogous way.

The error of our approximation can be decomposed as follows:
\begin{align}
&u^{flp,(n)}_{A,n_q}(t, x, M) - u^{flp}(t, x, M) \\
&= u^{flp,(n)}_{A,n_q}(t, x, M) - u^{flp}_A(t, x, M) + u^{flp}_A(t, x, M) - u^{flp}(t, x, M) \\
&= u^{flp,(n)}_{A,n_q}(t, x, M) - u^{flp}_A(t, x, M) - e^{-r\tau} \int_{A}^{\infty} P_x(M_\tau \ge y) dy, \label{eq:total-error-decomp}
\end{align}
where the last term is the error of truncating the infinite integral. 

We first provide an estimate of the truncation error. In general, this error is very small by choosing a sufficiently large $A$. Below we show that the truncation error decays faster than any negative power of $A$ if the drift and the diffusion coefficient functions satisfy the linear growth condition.  
\begin{proposition}
	Assume that $\mu^2(x) + \sigma^2(x) \le L(1 + x^2)$ for all $x \ge 0$ for some constant $L > 0$. Then for any integer $p\ge 1$, there exists a constant $C > 0$ independent of $A$ such that the error of truncating the infinite integral is bounded as follows:
	\begin{align}
		e^{-r\tau}\int_{A}^{\infty} P_x(M_\tau \ge y) dy \le C\frac{1 + x^{2p}}{A^{2p - 1}}.
	\end{align}   
\end{proposition}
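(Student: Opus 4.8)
The plan is to bound the tail $P_x(M_\tau \ge y)$ by a moment estimate and then integrate. Since $X$ lives on $\mathbb{R}^+$ we have $M_\tau = \sup_{0\le u\le\tau} X_u \ge 0$, and $M_\tau^{2p} = \sup_{0\le u\le\tau} X_u^{2p}$. Markov's inequality applied to the $2p$-th moment gives
\[
P_x(M_\tau \ge y) = P_x\!\left(M_\tau^{2p} \ge y^{2p}\right) \le \frac{\mathbb{E}_x[M_\tau^{2p}]}{y^{2p}},
\]
so that, integrating over $[A,\infty)$ with $2p>1$,
\[
e^{-r\tau}\int_A^\infty P_x(M_\tau \ge y)\,dy \le e^{-r\tau}\,\mathbb{E}_x[M_\tau^{2p}]\int_A^\infty \frac{dy}{y^{2p}} = \frac{e^{-r\tau}\,\mathbb{E}_x[M_\tau^{2p}]}{(2p-1)\,A^{2p-1}}.
\]
This already exhibits the $A^{-(2p-1)}$ decay, so the whole statement reduces to proving the moment bound $\mathbb{E}_x[M_\tau^{2p}] \le C(1+x^{2p})$ for a constant $C$ depending only on $p,L,\tau,r$ and in particular independent of $A$ and $y$.

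To establish the moment bound I would start from the SDE $dX_t = \mu(X_t)\,dt + \sigma(X_t)\,dW_t$ and apply It\^o's formula to $X_t^{2p}$:
\[
X_t^{2p} = x^{2p} + \int_0^t\!\Big(2p\,X_s^{2p-1}\mu(X_s) + p(2p-1)X_s^{2p-2}\sigma^2(X_s)\Big)ds + \int_0^t 2p\,X_s^{2p-1}\sigma(X_s)\,dW_s.
\]
The hypothesis $\mu^2(x)+\sigma^2(x)\le L(1+x^2)$ together with Young's inequality bounds both drift integrands by $C(1+X_s^{2p})$. Taking expectations (first along a localizing sequence of stopping times, then passing to the limit) removes the stochastic integral and yields $\mathbb{E}_x[X_t^{2p}] \le x^{2p} + Ct + C\int_0^t \mathbb{E}_x[X_s^{2p}]\,ds$, whence Gr\"onwall's inequality gives the pointwise bound $\sup_{0\le t\le\tau}\mathbb{E}_x[X_t^{2p}] \le C(1+x^{2p})$.

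To upgrade this to a bound on the running maximum, I would take the supremum over $t\in[0,\tau]$ in the It\^o representation before taking expectations and control the martingale term by the Burkholder--Davis--Gundy inequality,
\[
\mathbb{E}_x\!\left[\sup_{0\le t\le\tau}\Big|\int_0^t 2p\,X_s^{2p-1}\sigma(X_s)\,dW_s\Big|\right] \le C\,\mathbb{E}_x\!\left[\Big(\int_0^\tau X_s^{4p-2}\sigma^2(X_s)\,ds\Big)^{1/2}\right].
\]
Using $\sigma^2 \le L(1+x^2)$, the factorization $X_s^{4p}\le \big(\sup_{u\le\tau}X_u^{2p}\big)\,X_s^{2p}$, and the elementary inequality $\sqrt{ab}\le \tfrac{\varepsilon}{2}a + \tfrac{1}{2\varepsilon}b$ produces a term $\tfrac{\varepsilon}{2}\,\mathbb{E}_x\big[\sup_{t\le\tau}X_t^{2p}\big]$ that is absorbed into the left-hand side for $\varepsilon$ small, while the remaining pieces are dominated by $\int_0^\tau \mathbb{E}_x[X_s^{2p}]\,ds \le C(1+x^{2p})$ from the previous step. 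This delivers $\mathbb{E}_x[M_\tau^{2p}] \le C(1+x^{2p})$ and closes the argument.

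The main obstacle is this final step: the Gr\"onwall argument alone only controls $\mathbb{E}_x[X_t^{2p}]$ pointwise in $t$ and says nothing about the moment of the supremum, so the BDG estimate combined with the self-improving \emph{absorb-the-supremum} trick is indispensable. A secondary technical point is justifying the interchange of expectation and the stochastic integral, which requires the standard localization argument to ensure the local martingale term has zero mean before passing to the limit.
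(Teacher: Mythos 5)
Your proposal is correct, and its outer skeleton --- Markov's inequality applied to the $2p$-th moment of $M_\tau$, followed by integration of $y^{-2p}$ over $[A,\infty)$ --- is exactly the paper's argument. The difference lies in how the key moment bound $\mathbb{E}_x[M_\tau^{2p}] \le C(1+x^{2p})$ is obtained: the paper simply cites Karatzas and Shreve (Problem 3.3.15, p.~306), whereas you reprove it from scratch via It\^o's formula on $X_t^{2p}$, Young's inequality on the drift terms, localization plus Gr\"onwall for the pointwise moments $\sup_{t\le\tau}\mathbb{E}_x[X_t^{2p}]$, and then the Burkholder--Davis--Gundy inequality with the absorb-the-supremum trick to upgrade to $\mathbb{E}_x[\sup_{t\le\tau}X_t^{2p}]$. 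Your self-contained derivation is in fact the standard proof of the cited result and is sound; it buys transparency (it makes explicit that only linear growth is used, with no ellipticity or smoothness assumptions) at the cost of length. One technicality you should make explicit: the absorption of $\tfrac{\varepsilon}{2}\,\mathbb{E}_x\bigl[\sup_{t\le\tau}X_t^{2p}\bigr]$ into the left-hand side is only legitimate if that quantity is already known to be finite, so the BDG estimate, like the Gr\"onwall step, must be run for the stopped process $X_{t\wedge\tau_n}$ (with $\tau_n$ the exit time from $[0,n]$), producing a bound uniform in $n$, after which one lets $n\to\infty$ by Fatou or monotone convergence. You invoke localization only to justify the zero mean of the stochastic integral, but it is equally indispensable for the absorption step.
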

\begin{proof}
	By \cite{karatzas2012brownian} (Page 306, Problem 3.3.15), for any integer $p\ge 1$, $\mathbb{E}_x[M_\tau^{2p}] \le C (1 + x^{2p})$ for some constant $C>0$. Then by the Markov's inequality, we have,
	\begin{align}
		P_x[M_\tau \ge y] \le \frac{\mathbb{E}_x[M_\tau^{2p}]}{y^{2p}} \le C \frac{1 + x^{2p}}{y^{2p}}.
	\end{align}
	Therefore,
	\begin{align}
		\int_{A}^{\infty} P_x(M_\tau \ge y) dy \le C(1 + x^{2p}) \int_{A}^{\infty}\frac{1}{y^{2p}} dy = \frac{C(1 + x^{2p}) }{(2p-1) A^{2p-1}},
	\end{align}
	which concludes the proof.
\end{proof}

Next, we analyze $u^{flp,(n)}_{A,n_q}(t, x, M) - u^{flp}_A(t, x, M)$. 
This error arises from the quadrature and Markov chain approximation. To analyze its convergence rate, we impose the following conditions on the diffusion model. 
\begin{assumption}\label{assump:coefs}
	Assume that $X$ is a diffusion on $I = [\ell, \infty)$ with drift $\mu(x)$, diffusion coefficient $\sigma(x)$ and $\ell$ is an absorbing boundary. Suppose $\mu(x),\ \sigma(x) \in C^{\infty}([a, b])$ and $\min_{x \in [a, b]} \sigma(x) > 0$ for any $[a, b] \subseteq I$.  
\end{assumption}

In asset price models with $X_t\in (0,\infty)$, Assumption \ref{assump:coefs} does not hold. Our analysis requires an absorbing lower boundary because we need to use some results for the regular Sturm-Liouville eigenvalue problem with homogeneous boundary condition. 
In this case, in order to satisfy Assumption \ref{assump:coefs} we set $\ell=\epsilon$ for some positive $\epsilon$ very close to zero and introduce a new diffusion $X^{\epsilon}$, which is obtained from the original diffusion $X$ by making it stay at $\epsilon$ forever when it reaches there. We will estimate $u^{flp,(n)}_{A,n_q}(t, x, M) - u^{flp}_A(t, x, M)$ under $X^{\epsilon}$. The error of localizing the diffusion to $\epsilon$ is negligible if $\epsilon$ is chosen very close to zero. We emphasize that in our pricing algorithm the localization to $\epsilon$ is not needed. We only do it here to perform sharp error analysis.   

The quadrature rule applied in the paper satisfies 
\begin{equation}
	\sum_{i = 0}^{n_q} \omega_i = A-M,
\end{equation}
and $\omega_i > 0$ for $i = 0,1, \cdots, n_q$. The equation holds because the quadrature rule is exact for integrating a constant function. The next assumption considers its convergence rate. 

\begin{assumption}\label{assump:quad}
	For the quadrature rule, there exist positive integers $l$ and $m$ such that for any $f \in C^m([M, A])$,
	\begin{align}
		\left| \sum_{i = 0}^{n_q} \omega_i f(y_i) - \int_{M}^{A} f(y) dy \right| \le C n_q^{-l} \sup_{y \in (M, A)} \left| f^{(m)} (y) \right|,
	\end{align}
	where $C$ is a constant independent of $n_q$, $\{ \omega_i \}_{i = 0}^{n_q}$ and $f$.
\end{assumption}
To analyze the first part of error in \eqref{eq:total-error-decomp}, we need to first establish the smoothness of the key quantity $P_x(M_\tau < y)$ w.r.t. $y$. 
In previous works on barrier options (\cite{li2018error}, \cite{zhang2019analysis}, \cite{zhang2021nonsmooth}), the barrier level is fixed, so the smoothness of the first passage probability w.r.t. the barrier level is not analyzed there. 

To obtain the smoothness, we develop a representation for $g(\tau, x; y):=P_x(M_\tau < y)$ based on eigenfunction expansion. Using Ito's formula we can derive the PDE for $g(\tau, x; y)$ as
\begin{align}
	\begin{cases}
		g_\tau = \mu(x)g_x + \frac{1}{2}\sigma^2(x)g_{xx},\ x \in (\ell,y),\ \tau > 0,\\
		g(\tau, \ell; y) = 1,\ g(\tau, y; y) = 0,\ \tau \ge 0,\\
		g(0, x; y) = 1,\ x \in (\ell, y).
	\end{cases}
\end{align}
We decompose $g(\tau, x; y)$ as $g(\tau, x; y) = v(\tau, x; y) + w(x; y)$ with the two components satisfying
\begin{align}
	\begin{cases}
		\mu(x) w'(x; y) + \frac{1}{2}\sigma^2(x) w''(x;y) = 0,\ x \in (\ell,y),\\
		w(\ell; y) = 1,\ w(y; y) = 0, 
	\end{cases}
\end{align}
and
\begin{align}
	\begin{cases}
		v_\tau = \mu(x)v_x + \frac{1}{2}\sigma^2(x)v_{xx},\ x \in (\ell,y),\ \tau > 0,\\
		v(\tau,\ell;y) = v(\tau, y; y) = 0,\ \tau \ge 0,\\
		v(0, x; y) = 1 - w(x; y),\ x \in (\ell, y).
	\end{cases}
\end{align}
Let $T_0$ and $T_y$ be the diffusion's first hitting time of $\ell$ and $y$, respectively. The two quantities have a probabilistic meaning. We can show that $w(x;y)=P_x(T_{\ell}<T_y)$ and $v(\tau, x; y)=P_x(\tau<T_y<T_{\ell})$. 

The ODE for $w(x;y)$ can be solved analytically with the solution given by
\begin{align}
	w(x; y) = 1 - \frac{\int_{\ell}^{x} s(z) dz}{\int_{\ell}^{y} s(z) dz}, \label{eq:wxy-formula}
\end{align} 
where 
\begin{equation}
s(z) = \exp\left( -\int_{\ell}^{z} \frac{2\mu(x)}{\sigma^2(x)} dx \right)	
\end{equation}
is the scale density of the diffusion. The PDE for $v(\tau,x;y)$ can be solved by separation of variables and we obtain the following representation as an eigenfunction expansion:
\begin{align}
	v(\tau, x; y) = \sum_{k = 1}^{\infty} c_k(y)e^{-\lambda_k(y) \tau} \varphi_k(x; y),
\end{align}
where $\{(\lambda_k(y), \varphi_k(x; y)),\ k = 1, 2, \cdots\}$  are solutions to the Sturm-Liouville eigenvalue problem
\begin{align}
	\begin{cases}
		\mu(x) \psi'(x) + \frac{1}{2}\sigma^2(x) \psi''(x) = \lambda \psi(x), \ \ell < x < y,\\
		\psi(\ell) = \psi(y) = 0,
	\end{cases}
\end{align}
and 
\begin{equation}
c_k(y) = \int_{\ell}^{y} \varphi_k(z;y) (1 - w(z; y)) m(z) dz,\ m(z) = \frac{2}{\sigma^2(z) s(z)}. 
\end{equation}
Here, $c_k(y)$ is the $k$-th expansion coefficient and $m(z)$ is the speed density of the diffusion.  

\begin{lemma}\label{lmm:smoothness-eigs}
	$\lambda_k(y),\varphi_k(x; y)\in C^{\infty}([M,A])$ as a function of $y$. For any nonnegative integer $m$, there exist constants $c_1, c_2 > 0$ and integer $p>0$ independent of $y \in [M, A]$, $x \in [\ell, y]$ and $k \ge 1$ such that,
	\begin{align}
		\lambda_k(y) \ge c_1 k^2,\ \left|\partial_y^m \lambda_k(y)\right| \le c_2 k^p,\ \left|\partial_y^m \varphi_k(x; y)\right| \le c_2 k^p.
	\end{align}
\end{lemma}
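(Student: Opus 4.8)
The plan is to treat the Dirichlet problem as a regular Sturm--Liouville problem and to split the statement into two parts: (i) the lower bound $\lambda_k(y)\ge c_1k^2$, and (ii) the $C^\infty$ dependence on $y$ together with the polynomial derivative bounds. First I would rewrite the operator in self-adjoint form $\mathcal L\psi=\frac1m\big(\frac1s\psi'\big)'$, which is legitimate because Assumption \ref{assump:coefs} guarantees that the scale density $s$ and speed density $m$ are smooth and strictly positive on every compact subinterval of $I$, in particular on $[\ell,A]$. The eigenvalues are then real, simple, positive and accumulate only at $+\infty$; simplicity of the Dirichlet eigenvalues is the feature that makes smooth dependence possible.

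For part (i), I would apply the Liouville transformation $z=\int_\ell^x\sqrt{2/\sigma^2(\xi)}\,d\xi$ to reduce the equation to the normal form $-u''+Q(z)u=\lambda u$ on $(0,L(y))$ with Dirichlet conditions, where $L(y)$ is the transformed length. Standard Weyl asymptotics give $\lambda_k\sim(k\pi/L(y))^2$, and since the Dirichlet eigenvalues decrease as the interval $(\ell,y)$ grows (a one-line min--max argument with extension by zero), the slowest growth occurs at $y=A$; hence $\lambda_k(y)\ge\lambda_k(A)\ge c_1k^2$ uniformly in $y\in[M,A]$.

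For part (ii), my approach is the shooting method. Let $\phi(x;\lambda)$ solve $\mathcal L\phi=\lambda\phi$ with $\phi(\ell)=0$, $(\phi'/s)(\ell)=1$; this solution is entire in $\lambda$, smooth in $x$, and does not depend on $y$. The eigenvalues are the zeros of $F(\lambda,y):=\phi(y;\lambda)$, and simplicity gives $\partial_\lambda F(\lambda_k,y)\ne0$, so the implicit function theorem yields $\lambda_k(y)\in C^\infty([M,A])$ and then $\varphi_k(x;y)=\phi(x;\lambda_k(y))/\|\phi(\cdot;\lambda_k(y))\|_m\in C^\infty$, where $\|\cdot\|_m$ is the $L^2(m)$ norm. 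Differentiating $F(\lambda_k(y),y)=0$ once gives the Hadamard formula $|\lambda_k'(y)|=s(y)^{-1}\varphi_k'(y;y)^2$, via the Lagrange-identity relation $s(y)^{-1}\phi'(y)\,\partial_\lambda\phi(y)=\|\phi\|_m^2$ at $\lambda=\lambda_k$; differentiating repeatedly expresses $\partial_y^m\lambda_k$ and $\partial_y^m\varphi_k$ as rational combinations of the mixed partials $\partial_\lambda^a\partial_x^b\phi(y;\lambda_k)$ and lower-order $\partial_y^j\lambda_k$, divided by powers of $\partial_\lambda F$.

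The main obstacle is the uniform-in-$k$ polynomial control of these building blocks. I would establish, via the Pr\"ufer (or Liouville/WKB) representation together with Gronwall estimates on the variational ODEs for $\partial_\lambda^a\phi$, that on the bounded interval $[\ell,A]$ and uniformly in $y\in[M,A]$ the amplitude of $\phi$ is of order $\lambda^{-1/2}$, each $x$-derivative contributes a factor $\lambda^{1/2}$, and each $\lambda$-derivative contributes a fixed polynomial factor, so that $|\partial_\lambda^a\partial_x^b\phi(y;\lambda)|\le C(1+\lambda)^{q(a,b)}$. Combined with the matching lower bound $|\partial_\lambda F(\lambda_k,y)|=s(y)\|\phi\|_m^2/|\phi'(y)|\gtrsim\lambda_k^{-1}$ and $\lambda_k\ge c_1k^2$ from part (i), every division in the implicit-function recursion costs only a fixed power of $k$, and summing the finitely many terms of the $m$-th order recursion yields $|\partial_y^m\lambda_k|$ and $\sup_{x\in[\ell,y]}|\partial_y^m\varphi_k(x;y)|\le c_2k^p$ for some $p=p(m)$. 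The delicate point throughout is that all Pr\"ufer and Gronwall constants must be taken uniformly over the compact range of boundaries $y\in[M,A]$, which is precisely where the smoothness and non-degeneracy hypotheses of Assumption \ref{assump:coefs} on $[\ell,A]$ are used.
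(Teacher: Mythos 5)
Your proposal is correct in its overall strategy, but it takes a genuinely different route from the paper. The paper's proof is citation-based: it invokes Lemma 3 of \cite{zhang2019analysis} for the bound $\lambda_k(y)\ge\tilde c_1(y)k^2$ and then argues that $\tilde c_1(y)$ depends continuously on $y$ (so compactness of $[M,A]$ gives a uniform $c_1$), and it invokes Lemma 4.1 of \cite{zhang2021Parisian} for differentiability in the boundary level, asserting that under $C^\infty$ coefficients the cited proof ``extends'' to all orders $m$ with polynomial-in-$k$ bounds. You instead give a self-contained argument: for the lower bound, your domain-monotonicity observation ($\lambda_k(y)\ge\lambda_k(A)$ by min--max with extension by zero, then Weyl asymptotics on the single fixed interval $(\ell,A)$) is actually slicker than the paper's route, since it delivers uniformity in $y$ for free and avoids having to track how the constant in the cited lemma varies with $y$; for smoothness and the derivative bounds, your shooting-function construction ($F(\lambda,y)=\phi(y;\lambda)$ entire in $\lambda$, simplicity of zeros via the Lagrange identity, implicit function theorem, Hadamard's formula, and the recursion expressing $\partial_y^m\lambda_k$ and $\partial_y^m\varphi_k$ in terms of $\partial_\lambda^a\partial_x^b\phi$ divided by powers of $\partial_\lambda F$) is essentially the machinery the cited Parisian-option lemma is built on, but you make explicit the two quantitative inputs that the paper leaves implicit: polynomial-in-$\lambda$ upper bounds on the mixed partials of $\phi$ (via Pr\"ufer/WKB and Gronwall on the variational ODEs, uniformly over $y\in[M,A]$) and the lower bound $|\partial_\lambda F(\lambda_k,y)|\gtrsim\lambda_k^{-1}$, so that each division in the recursion costs only a fixed power of $k$. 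What the paper's approach buys is brevity, at the price of not being self-contained (the ``extensions'' of the cited lemmas are asserted, not proved); what yours buys is a transparent, verifiable path to the uniform-in-$k$ and uniform-in-$y$ statements, at the price of the lengthy (though standard) asymptotic estimates that a fully rigorous write-up would require --- a burden comparable to, but more clearly delineated than, what the paper defers to the reader. One small point to note in a full write-up: the normalization $\varphi_k=\phi/\|\phi(\cdot;\lambda_k(y))\|_m$ has $y$-dependence both through $\lambda_k(y)$ and through the upper limit of the norm integral, so differentiating it produces boundary terms (which vanish since $\phi(y;\lambda_k)=0$) that should be accounted for in the recursion; this is a detail, not a gap.
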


\begin{proof}
	By \cite{zhang2019analysis} Lemma 3, for any $y > \ell$, there exists constant $\tilde{c}_1(y) > 0$ such that $\lambda_k(y) \ge \tilde{c}_1(y) k^2$ for all $k \ge 1$. Extending the proof of \cite{zhang2019analysis} Lemma 3, we can show that the coefficient $\tilde{c}_1(y)$ depends on $y$ continuously. Hence $\tilde{c}_1(y)$ has a lower bound $c_1 > 0$ for $y \in [M, A]$. It is proved in \cite{zhang2021Parisian} Lemma 4.1 that the eigenvalues and eigenfunctions are three times continuously differentiable w.r.t. the boundary level $y$ by assuming that $\mu(x) \in C^3([a, b])$ and $\sigma(x) \in C^4([a, b])$ for any $[a, b] \subseteq [0, \infty)$. Further assuming that $\mu(x) \in C^\infty([a, b])$ and $\sigma(x) \in C^\infty([a, b])$ for any $[a, b] \subseteq [0, \infty)$ and extending the proof of \cite{zhang2021Parisian}, we have that $\partial_y^m \lambda_k(y)$ and $\partial_y^m \varphi_k(x; y)$ are well defined for all nonnegative integer $m$, $k \ge 1$, $y \in [M, A]$ and $x\in [\ell, y]$, and they are bounded by $c_2 k^p$ for some constant $c_2 > 0$ independent of $k,y,x$.
\end{proof}

\begin{lemma}\label{lmm:PxMy-smoothness}
	$P_x(M_\tau < y)\in C^{\infty}([M,A])$ as a function of $y$. For any nonnegative integer $m$,  there exists  a constant $C > 0$ independent of $y\in [M, A]$ and $x \in [\ell, y]$ such that $\left|\partial_y^m P_x(M_\tau < y)\right| \le C$.
\end{lemma}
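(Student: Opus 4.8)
The plan is to prove the claim through the additive decomposition $P_x(M_\tau < y) = v(\tau, x; y) + w(x; y)$ already introduced, treating the stationary part $w$ and the transient part $v$ separately. For $w$ I would work directly from the closed form \eqref{eq:wxy-formula}. Since $\mu,\sigma \in C^\infty$ with $\sigma$ bounded away from zero on compacts, the scale density $s$ is smooth and strictly positive, so both $\int_\ell^x s\,dz$ and $\int_\ell^y s\,dz$ are smooth in their arguments; moreover the denominator is bounded below by $\int_\ell^M s\,dz > 0$ for all $y \in [M, A]$. Differentiating the quotient $m$ times in $y$ (the numerator being $y$-independent) via the quotient rule, and using $\int_\ell^x s\,dz \le \int_\ell^y s\,dz$ for $x \le y$ together with the boundedness of $s$ and its derivatives on $[\ell, A]$, I would obtain $|\partial_y^m w(x; y)| \le C$ uniformly over $y \in [M, A]$ and $x \in [\ell, y]$.

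The substantive work is the transient part $v(\tau, x; y) = \sum_{k\ge 1} c_k(y)\, e^{-\lambda_k(y)\tau}\varphi_k(x; y)$. First I would record polynomial-in-$k$ bounds on the coefficients and their $y$-derivatives: differentiating $c_k(y) = \int_\ell^y \varphi_k(z; y)(1 - w(z; y))\, m(z)\,dz$ in $y$ and using that $\varphi_k(y; y) = 0$ identically in $y$ (which kills the leading boundary term), the bounds $|\partial_y^j \varphi_k| \le c_2 k^p$ of Lemma \ref{lmm:smoothness-eigs}, the boundedness of $\partial_y^j w$ from the previous step, the boundedness of the speed density $m$ on $[\ell, A]$, and the polynomial control of the eigenfunctions' normal derivatives afforded by the Sturm--Liouville ODE for any surviving boundary terms, I would obtain $|\partial_y^j c_k(y)| \le C k^{p'}$ for some power $p'$, uniformly in $y$. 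Next I would differentiate the series term by term $m$ times. By the Leibniz rule applied to the product $c_k\, e^{-\lambda_k \tau}\varphi_k$ and Faà di Bruno's formula applied to $\partial_y^j e^{-\lambda_k(y)\tau}$, each differentiated term equals $e^{-\lambda_k(y)\tau}$ times a finite sum of products of $\tau$-powers and the derivatives $\partial_y^i \lambda_k$, $\partial_y^i \varphi_k$, $\partial_y^i c_k$, all of which are polynomially bounded in $k$. Hence the $m$-th differentiated general term is bounded by $Q(k, \tau)\, e^{-\lambda_k(y)\tau}$ with $Q$ polynomial in $k$, and the lower bound $\lambda_k(y) \ge c_1 k^2$ gives $e^{-\lambda_k(y)\tau} \le e^{-c_1 k^2 \tau}$.

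Convergence then follows because $\sum_{k\ge 1} Q(k,\tau)\, e^{-c_1 k^2 \tau} < \infty$ for every fixed $\tau > 0$: the Gaussian-in-$k$ decay dominates any polynomial factor, and the resulting bound is uniform in $y \in [M, A]$ and $x \in [\ell, y]$ since the constants $c_1, c_2, p$ and the coefficient bounds are all uniform in these variables. This uniform absolute convergence of the differentiated series simultaneously justifies term-by-term differentiation — so that $v(\tau, \cdot\,; \cdot)$ is $C^\infty$ in $y$ — and supplies $|\partial_y^m v(\tau, x; y)| \le C$. Adding the two estimates yields the claim. I expect the main obstacle to be the bookkeeping in the combined Leibniz--Faà di Bruno expansion, namely verifying that every factor produced is controlled by the polynomial-in-$k$ bounds of Lemma \ref{lmm:smoothness-eigs} so that the single exponential $e^{-c_1 k^2 \tau}$ tames the whole expression; the allowed $\tau$-dependence of $C$ (which may grow like a negative power of $\tau$ as $\tau \downarrow 0$) is what keeps the argument clean.
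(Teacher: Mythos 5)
Your proposal is correct and follows essentially the same route as the paper's own proof: the decomposition $P_x(M_\tau<y)=w(x;y)+v(\tau,x;y)$, smoothness of $w$ from the closed form \eqref{eq:wxy-formula}, polynomial-in-$k$ bounds on $\partial_y^j c_k$, $\partial_y^j\varphi_k$ and $\partial_y^j e^{-\lambda_k(y)\tau}$ from Lemma \ref{lmm:smoothness-eigs}, and term-by-term differentiation of the eigenfunction expansion justified by the uniformly convergent majorant $\sum_k k^{p}e^{-ck^2\tau}$. The only difference is that you spell out the Leibniz/Fa\`a di Bruno bookkeeping and the boundary terms in $\partial_y c_k$ (which the paper leaves implicit), which is a harmless elaboration rather than a different argument.
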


\begin{proof}
	The smoothness of $w(x; y)$ w.r.t. $y$ can be seen from its analytical formula \eqref{eq:wxy-formula}. For $v(\tau, x; y)$, by Lemma \ref{lmm:smoothness-eigs}, for any nonnegative integer $m$, we have that $\left|\partial_y^m c_k(y)\right| \le C_1 k^{p_1}$, $\left| \partial_y^m e^{-\lambda_k(y) \tau} \right| \le C_1 k^{p_1} e^{-C_2 k^2}$ for constant $C_1,\ C_2 > 0$ and integer $p_1 > 0$ independent of $k \ge 1$ and $y \in [M, A]$. Hence, there exist constants $C_3,\ C_4 > 0$ and integer $p_2 > 0$ such that,
	\begin{align}
		\sum_{k = 1}^{\infty} \left|\partial_y^m \left( c_k(y)e^{-\lambda_k(y) \tau} \varphi_k(x; y) \right)\right| \le C_3 \sum_{k = 1}^{\infty} k^{p_2} e^{-C_2 k^2}  \le C_4 < \infty.
	\end{align}
	Then $v(\tau, x; y)$ is $m$-times differentiable w.r.t. $y$ and
	\begin{equation}
		\partial_y^m v(\tau, x; y)=\sum_{k = 1}^{\infty} \partial_y^m \left( c_k(y)e^{-\lambda_k(y) \tau} \varphi_k(x; y) \right).
	\end{equation}
	Hence, $\left| \partial_y^m v(\tau, x; y) \right| \le C_4$. The claim follows by combining the smoothness of $w(x; y)$ and $v(\tau, x; y)$ and noting that $P_x(M_\tau < y) = w(x; y) + v(\tau, x; y)$. 
\end{proof}

Now, we are ready to analyze the error caused by quadrature and CTMC approximation.
\begin{theorem}
	Suppose Assumption \ref{assump:coefs} and Assumption \ref{assump:quad} hold. Then we have that,
	\begin{align}
		\left| u^{flp, (n)}_{n_q} (t, x, M) - u^{flp}_A(t, x, M) \right| \le C_1 \delta_n^\gamma +C_2 \mathcal{E}_{n_q} \left( P_x(M_\tau < \cdot) \right),
	\end{align}
	where $C_1,\ C_2 > 0$ are constants independent of $x \in \mathbb{S}^{(n)}$, $n$ and the quadrature scheme used, and $\gamma = 1$ in general and $\gamma = 2$ if all of the points $y_0, y_1, \cdots, y_{n_q} \in \mathbb{S}^{(n)}$, and,
	\begin{align}
		\mathcal{E}_{n_q} \left( P_x(M_\tau < \cdot) \right) &=\left| \sum_{i=0}^{n_q} \omega_i P_x (M_\tau < y_i) - \int_{M}^{A} P_x(M_\tau < y) dy \right| \\
		& \le C_3 n_q^{-l} \sup_{y \in (M, A) } \left|\partial_y^m P_x(M_\tau < y)
		 \right|
	\end{align}
\end{theorem}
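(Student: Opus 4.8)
The plan is to separate the total error $u^{flp,(n)}_{n_q}(t,x,M)-u^{flp}_A(t,x,M)$ into a \emph{pure quadrature error} and a \emph{CTMC-approximation error} and to bound each. Write $g(\tau,x;y):=P_x(M_\tau<y)$ and $g^{(n)}(\tau,x;y_i):=P_x(T^{(n),+}_{y_i^-}>\tau)=P_x(M^{(n)}_\tau<y_i)$. Since the two prices agree except in the integral/sum term,
\begin{align}
u^{flp,(n)}_{n_q}-u^{flp}_A=e^{-r\tau}\left[\int_M^A g(\tau,x;y)\,dy-\sum_{i=0}^{n_q}\omega_i\,g^{(n)}(\tau,x;y_i)\right].
\end{align}
Adding and subtracting the quadrature sum evaluated at the \emph{true} probabilities gives
\begin{align}
\int_M^A g\,dy-\sum_{i=0}^{n_q}\omega_i g^{(n)}(\tau,x;y_i)
&=\left(\int_M^A g\,dy-\sum_{i=0}^{n_q}\omega_i g(\tau,x;y_i)\right)\\
&\quad+\sum_{i=0}^{n_q}\omega_i\left(g(\tau,x;y_i)-g^{(n)}(\tau,x;y_i)\right),
\end{align}
whose first term equals $\pm\mathcal{E}_{n_q}(P_x(M_\tau<\cdot))$ (call it (I)) and whose second term I will call (II).

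Multiplying (I) by $e^{-r\tau}$ produces the $C_2\,\mathcal{E}_{n_q}$ contribution with $C_2=e^{-r\tau}$. The explicit bound on $\mathcal{E}_{n_q}$ then follows by applying Assumption \ref{assump:quad} to the integrand $y\mapsto P_x(M_\tau<y)$: Lemma \ref{lmm:PxMy-smoothness} guarantees it lies in $C^\infty([M,A])$ with $|\partial_y^m P_x(M_\tau<y)|\le C$ uniformly in $y\in[M,A]$ and $x\in[\ell,y]$, so Assumption \ref{assump:quad} yields $\mathcal{E}_{n_q}\le C_3 n_q^{-l}\sup_{y\in(M,A)}|\partial_y^m P_x(M_\tau<y)|$.

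For (II) I would bound each summand by the CTMC error of a single up-and-out barrier claim with barrier $y_i$ and unit payoff, since $g(\tau,x;y_i)$ and $g^{(n)}(\tau,x;y_i)$ are exactly the continuous and chain prices of such a claim. Under Assumption \ref{assump:coefs} the sharp convergence estimates of \cite{zhang2019analysis} (see also \cite{li2018error}, \cite{zhang2021nonsmooth}) give $|g(\tau,x;y)-g^{(n)}(\tau,x;y)|\le C(y)\delta_n$ in general and $\le C(y)\delta_n^2$ when $y\in\mathbb{S}^{(n)}$. Using $\omega_i>0$ and $\sum_{i=0}^{n_q}\omega_i=A-M$,
\begin{align}
|\text{(II)}|\le\Big(\sup_{y\in[M,A]}C(y)\Big)\delta_n^\gamma\sum_{i=0}^{n_q}\omega_i=(A-M)\Big(\sup_{y\in[M,A]}C(y)\Big)\delta_n^\gamma,
\end{align}
with $\gamma=1$ in general and $\gamma=2$ when all $y_i\in\mathbb{S}^{(n)}$. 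Setting $C_1=e^{-r\tau}(A-M)\sup_{y}C(y)$ completes the first term; both $C_1$ and $C_2$ are independent of the quadrature scheme because the argument uses only $\sum_i\omega_i=A-M$ and $\omega_i>0$.

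The crux, which I expect to be the only genuinely technical part, is showing $\sup_{y\in[M,A]}C(y)<\infty$, i.e.\ that the barrier-convergence constant can be taken uniform in the barrier level $y$. The constants of \cite{zhang2019analysis} are controlled by bounds on the $x$-derivatives of the barrier value function $P_x(M_\tau<y)$ on $[\ell,y]$, so I would feed in the eigenfunction expansion $P_x(M_\tau<y)=w(x;y)+\sum_k c_k(y)e^{-\lambda_k(y)\tau}\varphi_k(x;y)$ and invoke Lemma \ref{lmm:smoothness-eigs}: the uniform spectral-gap bound $\lambda_k(y)\ge c_1 k^2$ and the polynomial-in-$k$ control of $\varphi_k$ (and, by the same Sturm--Liouville argument, of its $x$-derivatives) yield absolutely convergent, uniformly-in-$y$ bounds on the $x$-derivatives of $P_x(M_\tau<y)$. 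This gives a finite $\sup_y C(y)$ and hence the stated $\gamma$-dependent rate, the remainder of the proof being the clean split above together with the positivity and consistency of the weights.
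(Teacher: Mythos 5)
Your proposal is correct and follows essentially the same route as the paper: the identical add-and-subtract decomposition into a pure quadrature error (bounded via Lemma \ref{lmm:PxMy-smoothness} and Assumption \ref{assump:quad}) and a CTMC barrier-approximation error (bounded via the $\delta_n^\gamma$ estimates of \cite{zhang2019analysis}, using $\omega_i>0$ and $\sum_i\omega_i=A-M$). The only cosmetic difference is in justifying uniformity of the barrier constant over $y\in[M,A]$ -- the paper asserts continuity of $\tilde C(y)$ by inspecting the cited proof and uses compactness, while you sketch the same fact through the eigenfunction expansion and Lemma \ref{lmm:smoothness-eigs}; both are sketches of the same underlying point.
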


\begin{proof}
	By \eqref{eq:lbput-trunc} and \eqref{eq:lbput-quad}, we have that,
	\begin{align}
		&\left| u^{flp, (n)}_{n_q} (t, x, M) - u^{flp}_A(t, x, M) \right| \\
		&= \left| \sum_{i = 0}^{n_q} \omega_i P_x (M_\tau^{(n)} < y_i) -  \int_{M}^{A} P_x(M_\tau < y) dy  \right| \\
		&\le  \sum_{i = 0}^{n_q} \omega_i \left|P_x (M_\tau^{(n)} < y_i) - P_x(M_\tau < y_i)\right| + \left| \sum_{i = 0}^{n_q} \omega_i P_x (M_\tau < y_i) -  \int_{M}^{A} P_x(M_\tau < y) dy  \right| \\
		&\le (A - M)\max_{0 \le i \le n_q} \left|P_x (M_\tau^{(n)} < y_i) - P_x(M_\tau < y_i)\right| + \mathcal{E}_{n_q} \left( P_x(M_\tau < \cdot) \right).
	\end{align}
	It is proved that in \cite{zhang2019analysis} Theorem 1 that
	\begin{equation}
		\left|P_x (M_\tau^{(n)} < y) - P_x(M_\tau < y)\right| \le \tilde{C}(y) \delta_n
	\end{equation}
	for some constant $\tilde{C}(y) > 0$ independent of $n$ and $x$. Furthermore, the theorem shows that if the barrier level $y \in \mathbb{S}^{(n)}$, then there holds that
	\begin{equation}
		\left|P_x (M_\tau^{(n)} < y) - P_x(M_\tau < y)\right| \le \tilde{C}(y) \delta_n^2.
	\end{equation}
	Inspecting the proof of \cite{zhang2019analysis} Theorem 1, we can show that the coefficient $\tilde{C}(y)$ depends on $y$ continuously and hence it has an upper bound $\tilde{C}>0$ for $y \in \{ y_0, y_1, \cdots, y_{n_q} \} \subset [M, A]$. Therefore, the first part of error is bounded by $(A-M)\tilde{C} \delta_n^\gamma$. For the second part of error, it suffices to recall that $ P_x(M_\tau < y)$ as a function of $y$ is in $C^\infty([M, A])$ as proved in Lemma \ref{lmm:PxMy-smoothness}. This concludes the proof.
\end{proof}

\section{Numerical Results}\label{sec:numerical}
We consider four representative models to evaluate the performance of our method:
\begin{itemize}
	\item The Black-Scholes (BS) model: $\sigma = 0.3$.
	\item The regime-switching BS model: the volatility $\sigma$ is $0.2$ and $0.4$ in regime $1$ and regime $2$, respectively. And the transition rate is $0.75$ from regime $1$ to regime $2$ and it is $0.25$ from regime $2$ to regime $1$.
	\item The CEV model (\cite{davydov2001pricing}): $\sigma = 0.25$ and $\beta = -0.5$.
	\item Kou's double-exponential jump-diffusion model (\cite{kou2002jump}): $\sigma = 0.3, q^+ = q^- = 0.5, \eta^+ = \eta^- = 0.1$, $\lambda = 3.0$.
	\item The Carr-Geman-Madan-Yor (CGMY) model (\cite{carr2002fine}): $C = 1, G = 9, M = 8, Y = 0.5$.
\end{itemize}
The BS and CEV model are two popular 1D diffusion models, and the last two are well-known models with jumps. The Kou model is a jump-diffusion with finite jump activity and the CGMY model is a pure-jump process with infinite jump activity. 

We price a seasoned lookback put option at $t=0$. Except the CEV model, we set $x=1$, $M=1.5$, $T=1$, the risk-free rate $r = 0.05$ and dividend yield $d = 0.02$. For the CEV model, we set $x=M=1$, $T= 0.5$, $r=0.1$ and $d=0$, and these values are taken from \cite{davydov2001pricing} so that we can use the price reported there computed by their analytical formula as a benchmark. In our implementation, we use the grid design 
in Section \ref{sec:grid} for the CTMC approximation by placing all quadrature points on the grid. 

The left panels of Figure \ref{fig:bs-rs-cev-extra} and \ref{fig:kou-cgmy-conv-extra} display the convergence of the option price against the number of Markov chain grid points under these five models using the trapezoid rule and the Gauss-Legendre quadrature. From these plots, it is clear that using the grid design in Section \ref{sec:grid} attains smooth convergence in all cases. Gauss quadrature outperforms the trapezoid rule overwhelmingly in each case and should be the preferred choice. Under the trapezoid rule with a fixed number of quadrature points, the pricing error barely decays after the number of grid points for the CTMC passes some level. This is because the numerical integration error remains and it is quite significant even though the CTMC approximation error becomes very small. Except for the CEV model, the $11$-point Gauss quadrature suffices for reaching a high level of accuracy for numerical integration. For the CEV model, $21$ points are needed in the Gauss quadrature for highly accurate results, which is likely due to the exploding volatility near zero in this model. 

Using Gauss quadrature, the numerical integration error becomes negligible. Thus, we can estimate the convergence rate of CTMC approximation numerically for each model by regressing the logarithmic error against the number of Markov chain states in the results from Gauss quadrature. To calculate the error, the benchmark is computed by the closed-form formula derived in \cite{goldman1979path} for the BS model and taken from \cite{davydov2001pricing} for the CEV model. For the other three models, we use the result of our algorithm with a very large $n$ as a benchmark.  We find that the convergence order is $1.99$ for the BS model, $2.01$ for the CEV model and $2.02$ for the regime-switching BS model, $1.92$ for the Kou model and $1.06$ for the CGMY model. The estimate convergence orders for BS and CEV are very close to the theoretical convergence order of two.  

As there are no oscillations, we can apply extrapolation to accelerate convergence. For the Kou and CGMY model, we follow the extrapolation method used in Section 5 of \cite{zhang2019analysis} with estimated convergence orders. Such method requires three points for extrapolation in contrast to the standard two-point extrapolation method when the convergence order is known. For the BS, CEV and regime-switching BS model, we apply the two-point extrapolation based on second order convergence. The right panels of Figure \ref{fig:bs-rs-cev-extra} and \ref{fig:kou-cgmy-conv-extra} clearly show the effectiveness of extrapolation. Using $n$ below $100$, one can attain a high level of accuracy that requires $n$ to be several hundred or even more without extrapolation.

An alternative general approach to price a lookback option is numerically solving the PDE (for diffusions) or PIDE (for processes with jumps) it satisfies. For a diffusion model with drift $\mu(x)$ and volatility $\sigma(x)$, the PDE for the floating-strike lookback put price $u^{flp}(t,x,M)$ is given by 
\begin{align}
	\begin{cases}
		u^{flp}_t + \mu(x) u^{flp}_x + \frac{1}{2} \sigma^2(x) u^{flp}_{xx} - r u^{flp} = 0,\ 0 \le t < T,\ 0 < x < M,\\
		u^{flp}_M(t, M, M) = 0,\ 0 \le t < T,\\
		u^{flp}(t, 0, M) = e^{-r(T-t)} M,\ 0 \le t \le T,\\
		u^{flp}(T, x, M) = M - x,\ 0 \le x \le M. 
	\end{cases}
\end{align}
We use second order finite differences to approximate the derivatives in the PDE and use the Crank-Nicolson scheme to do time stepping. This is a standard finite difference scheme for such PDEs. We first discretize the $x$ and $M$ dimensions with a uniform grid as $\{(i \Delta x, j\Delta x): i, j = 0, 1, \cdots, N_x, j \ge i\}$ with $N_x\Delta x = \overbar{M}$ where $\Delta x$ and $\overbar{M}$ are the spatial step size and localization level, respectively. The time is discretized as $\{i\Delta t: i = 0, 1,\cdots, N_t\}$ with $N_t \Delta t = T$. Let $u_k^{i, j}$ be the finite difference approximation to $u^{flp}(k\Delta t, i\Delta x, j\Delta x )$ for $k = 0, 1, \cdots, N_t$, $i, j = 0, 1, \cdots, N_x$ and $i \le j$. The finite difference scheme proceeds as follows. First of all, applying the terminal condition, we get,
\begin{align}
	u_{N_t}^{i,j} = (j - i)\Delta x,\ 0 \le i \le j \le N_x.
\end{align}
At the upper localization level $j\Delta x = N_x\Delta x = \overbar{M}$, we apply the artificial boundary condition,
\begin{align}
	u_k^{i, j} = (j- i) \Delta x,\ 0 \le i \le j = N_x.\label{eq:artificial}
\end{align}
For $k = N_t - 1, N_t - 2, \cdots, 0$, we do the following. Letting $\mu_i = \mu(i\Delta x)$ and $\sigma_i = \sigma(i\Delta x)$ and applying central difference approximation and Crank-Nicolson time stepping, we have,
\begin{align}
	&\frac{u_{k+1}^{i,j} - u_k^{i,j}}{\Delta t} + \frac{1}{2}\left( \mu_i \frac{u_k^{i+1,j} - u_k^{i-1,j}}{2\Delta x} + \frac{1}{2}\sigma_i^2 \frac{u_k^{i+1,j} -2u_k^{i,j} + u_k^{i-1,j}}{\Delta x^2} - r u_k^{i,j} \right) \\
	&+  \frac{1}{2}\left( \mu_i \frac{u_{k+1}^{i+1,j} - u_{k+1}^{i-1,j}}{2\Delta x} + \frac{1}{2}\sigma_i^2 \frac{u_{k+1}^{i+1,j} -2u_{k+1}^{i,j} + u_{k+1}^{i-1,j}}{\Delta x^2} - r u_{k+1}^{i,j} \right) = 0,\ 0 < i < j < N_x. \label{eq:fd-central}
\end{align}
The boundary condition at $x=0$ gives,
\begin{align}
	u_k^{0,j} = e^{-r(T-k\Delta t)} j\Delta x,\ 0 \le j \le N_x. \label{eq:fd-left-boundary}
\end{align}
At the boundary $x = M$, we discretize the boundary condition with a second and first order one sided finite difference approximation when $j < N_x - 1$ and $j = N_x-1$ respectively and get,
\begin{align}
	&\frac{-3u_k^{i,j} + 4u_k^{i,j+1} -u_k^{i,j+2}}{2\Delta x} = 0,\ 0 < i = j < N_x - 1,\label{eq:fd-right-boundary1}\\
	&\frac{u_k^{i,j+1} - u_k^{i,j}}{\Delta x} = 0,\ i = j = N_x - 1. \label{eq:fd-right-boundary2}
\end{align}

%For each $k = N_t - 1, N_t - 2, \cdots, 0$ and $j = N_x-1, N_x-2,\cdots, 1$, we solve a linear system given by \eqref{eq:fd-central}, \eqref{eq:fd-left-boundary} and \eqref{eq:fd-right-boundary} to calculate $\{u_k^{i, j}: 0 \le i \le j\}$ based on  $\{ u_{k+1}^{i, j}: 0 \le i \le j \le N_x  \}$, $\{ u_k^{i, j+1}: 0 \le i \le j+1  \}$ and $\{ u_k^{i, j+2}: 0 \le i \le j+2  \}$ which have been calculated in previous steps.

The finite difference scheme proceeds as follows. Set $u_{N_t}^{i, j} = (j - i)\Delta x$ for all $0 \le i \le j \le N_x$.
For each $k = N_t - 1, N_t - 2, \cdots, 0$, we do the following:

Step 1: Set $u_k^{i,N_x} = (N_x - i)\Delta x$ for $i=0,1,\cdots,N_x$ by \eqref{eq:artificial}.

Step 2: For $j = N_x - 1, N_x - 2, \cdots, 1$, 
\begin{itemize}
	\item set $u_k^{0, j} = e^{-r(T-k\Delta t)} j\Delta x$;
	
	\item if $j = N_x - 1$, set $u_k^{j, j} = u_k^{j, j+1} = \Delta x$  by \eqref{eq:artificial} and \eqref{eq:fd-right-boundary2}, otherwise set $u_k^{j, j} = (4u_k^{j, j+1} - u_k^{j, j+2})/3$ by \eqref{eq:fd-right-boundary1};
	\item obtain $\{u_k^{i, j}: 0 \le i \le j\}$ by solving the linear system \eqref{eq:fd-central}.
\end{itemize}

Figure \ref{fig:lbput-comparison} compares the performance of our algorithm to the finite difference scheme under the CEV model. For similar amount of time taken, our algorithm is significantly more accurate than finite difference. It should be pointed out that since our method is a general approach, it's not quite fair to compare it with an algorithm that only applies to specific models. If one is only interested in a specific model, it's possible that a bespoke algorithm for this model that takes advantage of its special properties can be better than our algorithm.

\begin{figure}
	\centering
	\includegraphics[width=0.45\textwidth]{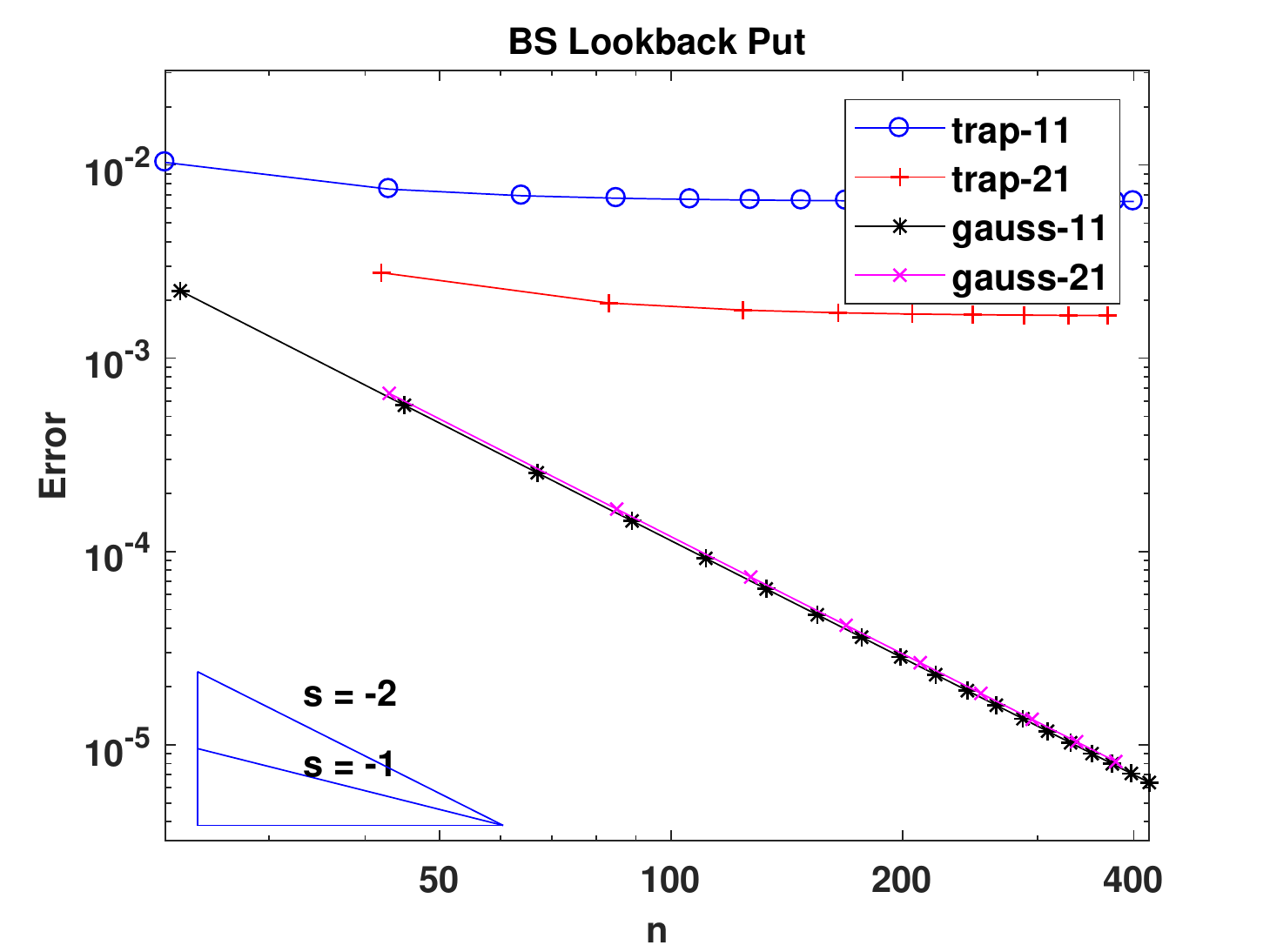}
	\includegraphics[width=0.45\textwidth]{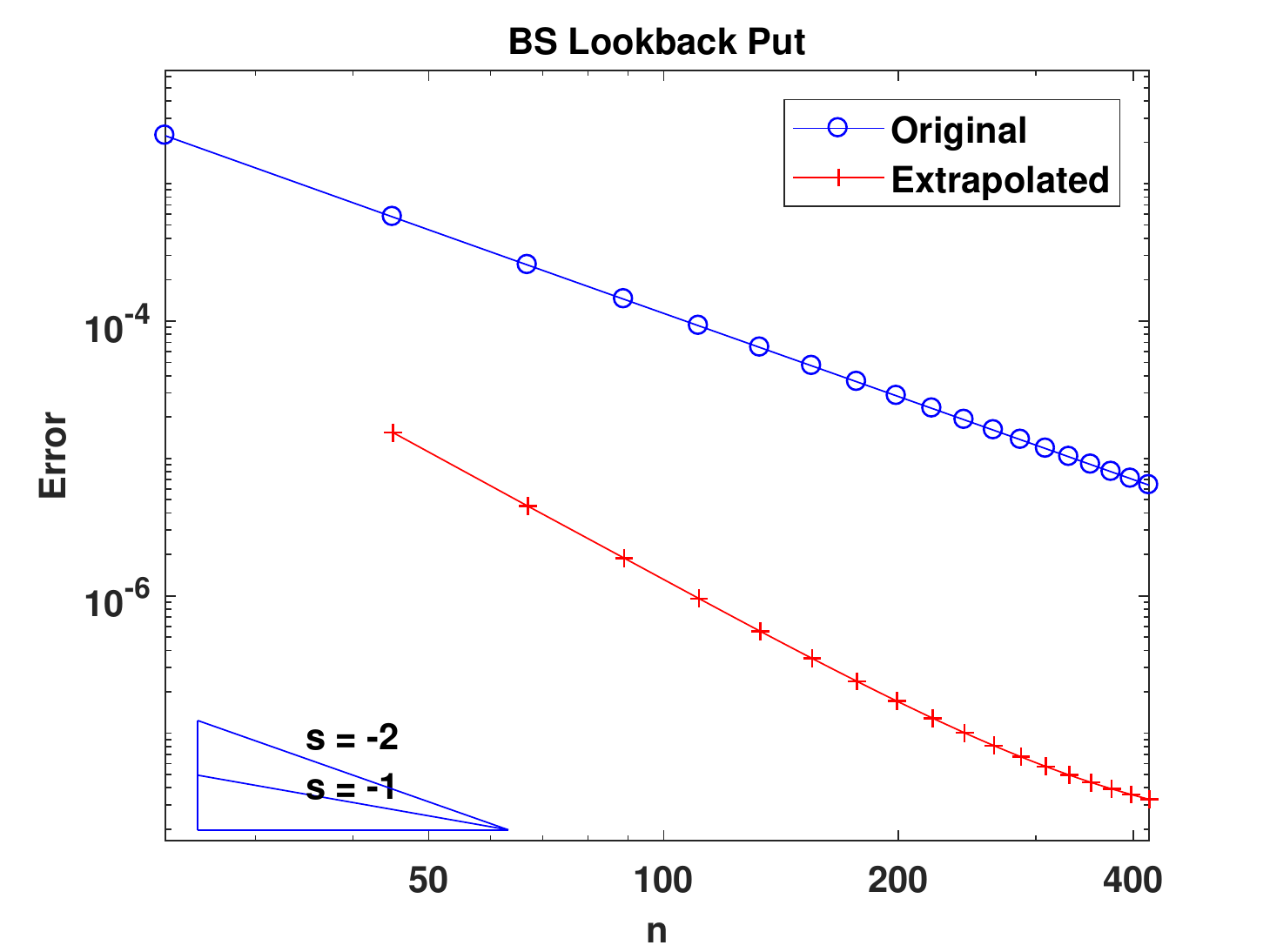}
	\includegraphics[width=0.45\textwidth]{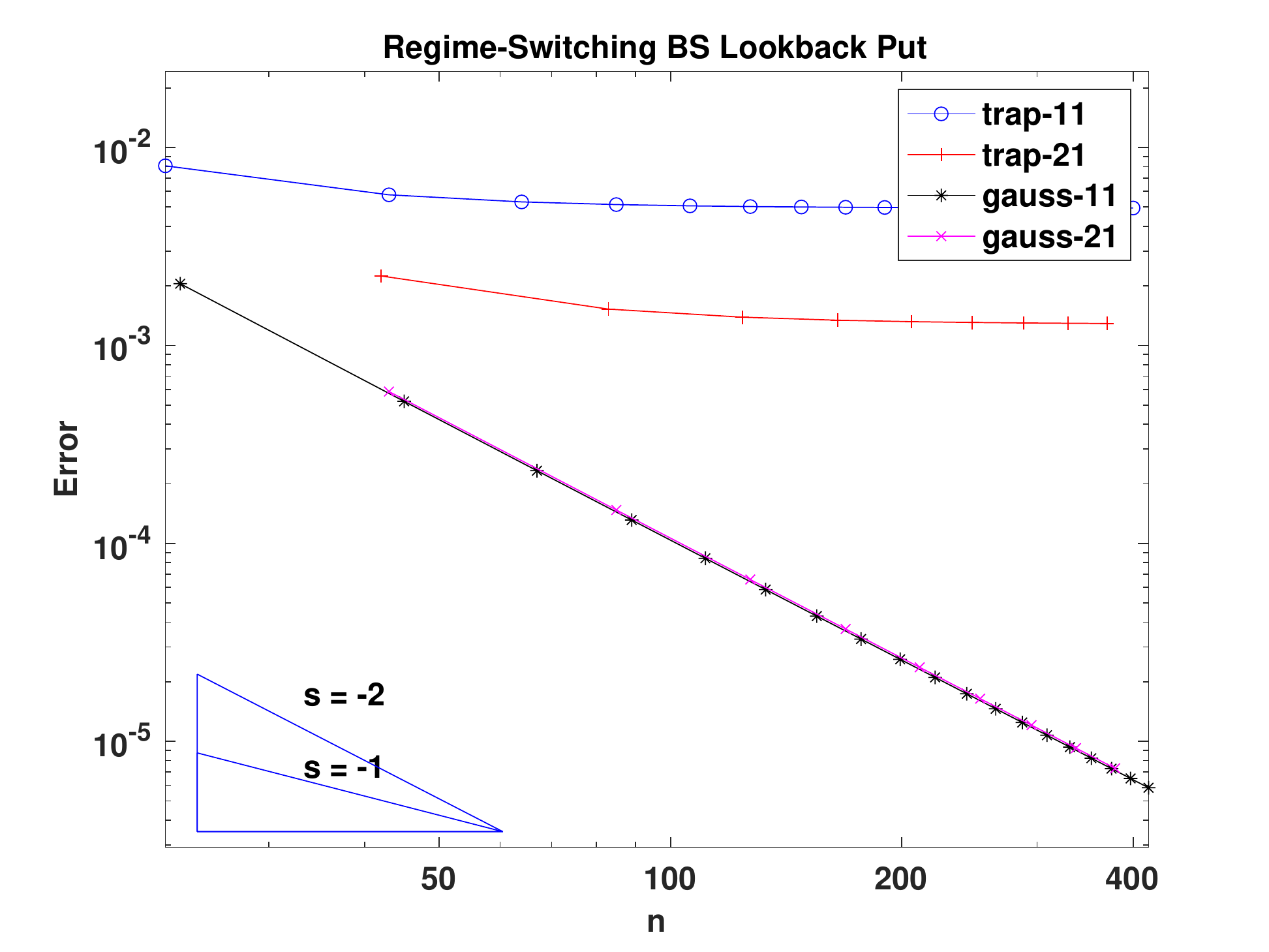}
	\includegraphics[width=0.45\textwidth]{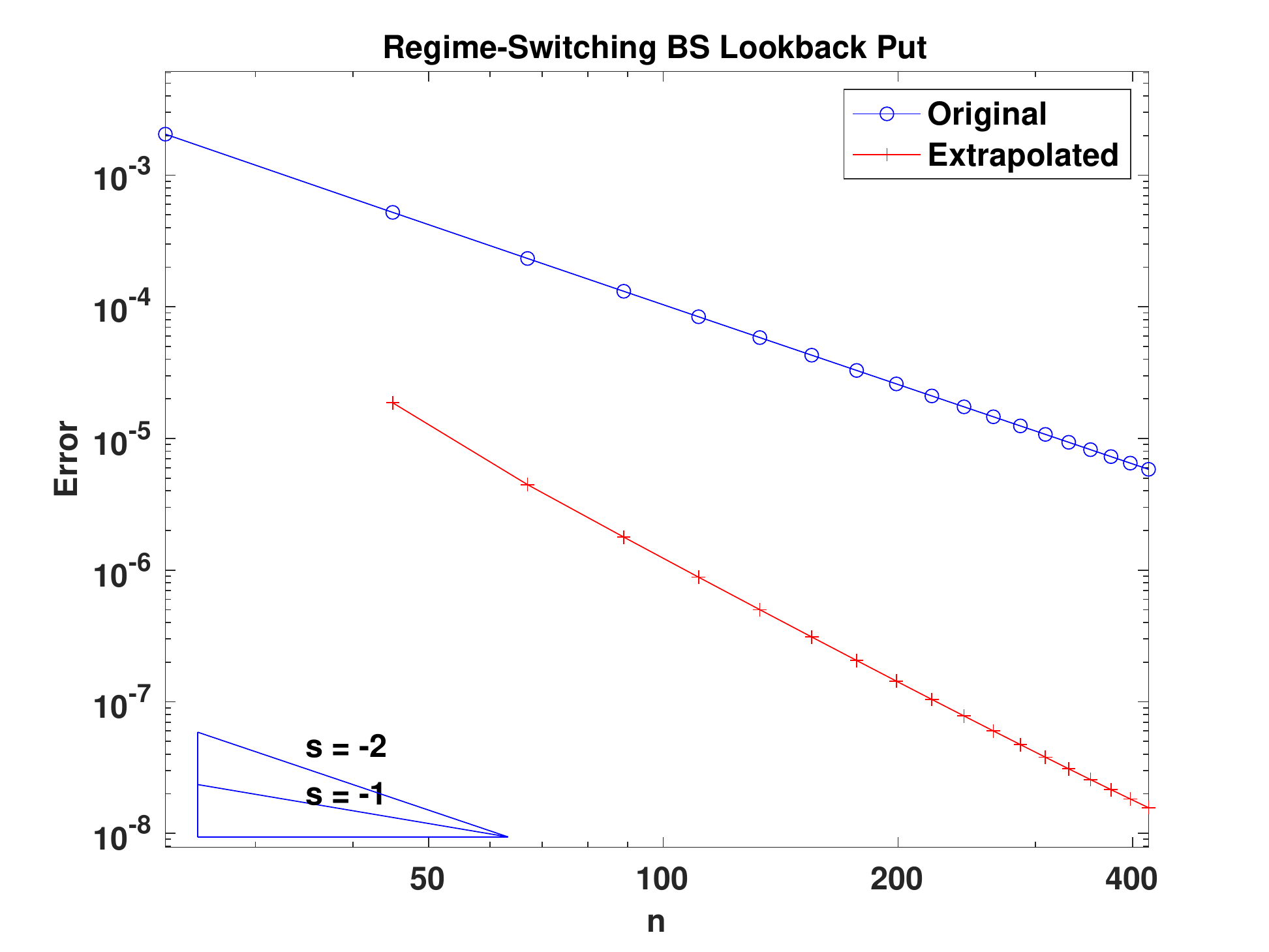}
	\includegraphics[width=0.45\textwidth]{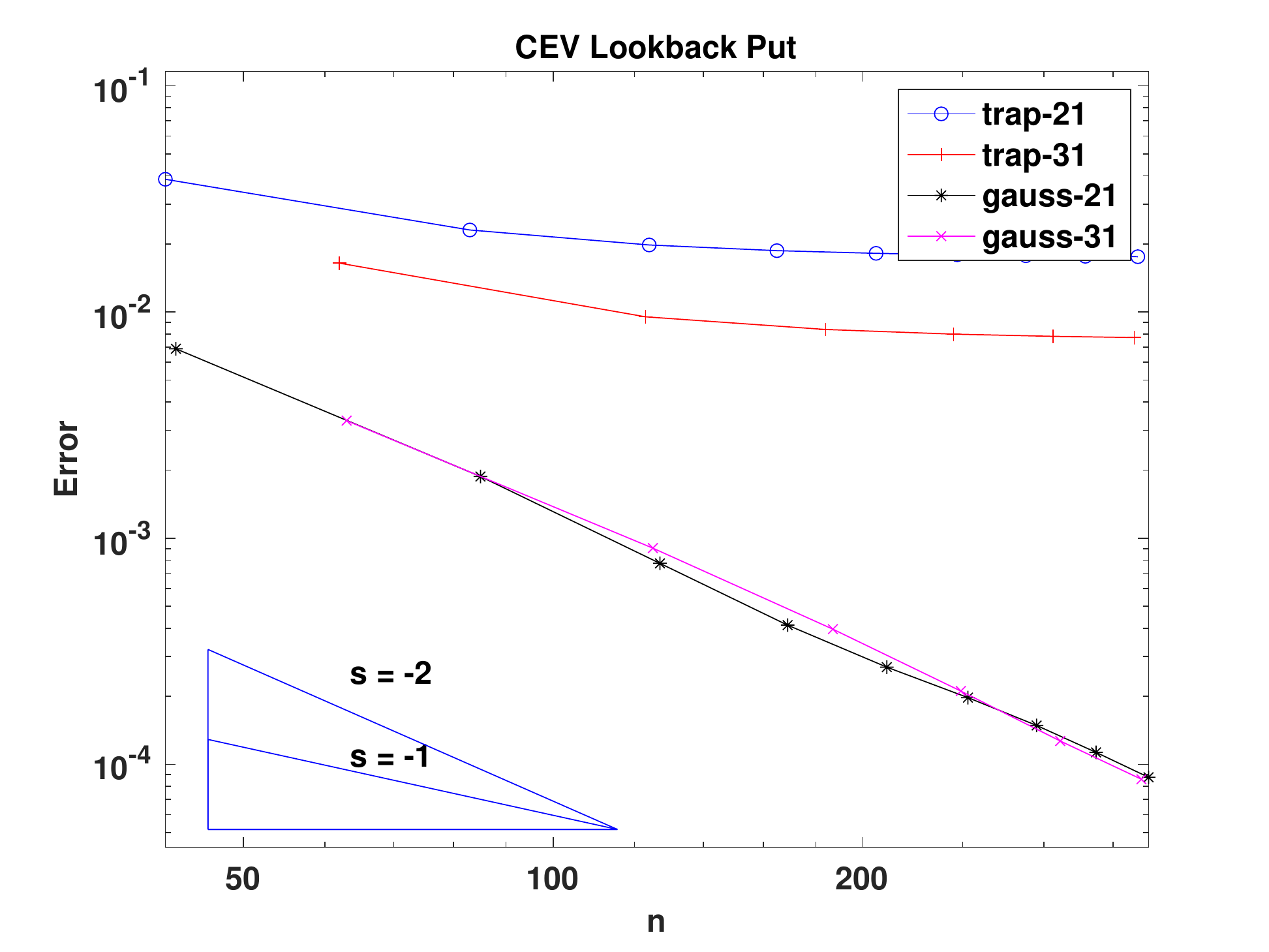}
	\includegraphics[width=0.45\textwidth]{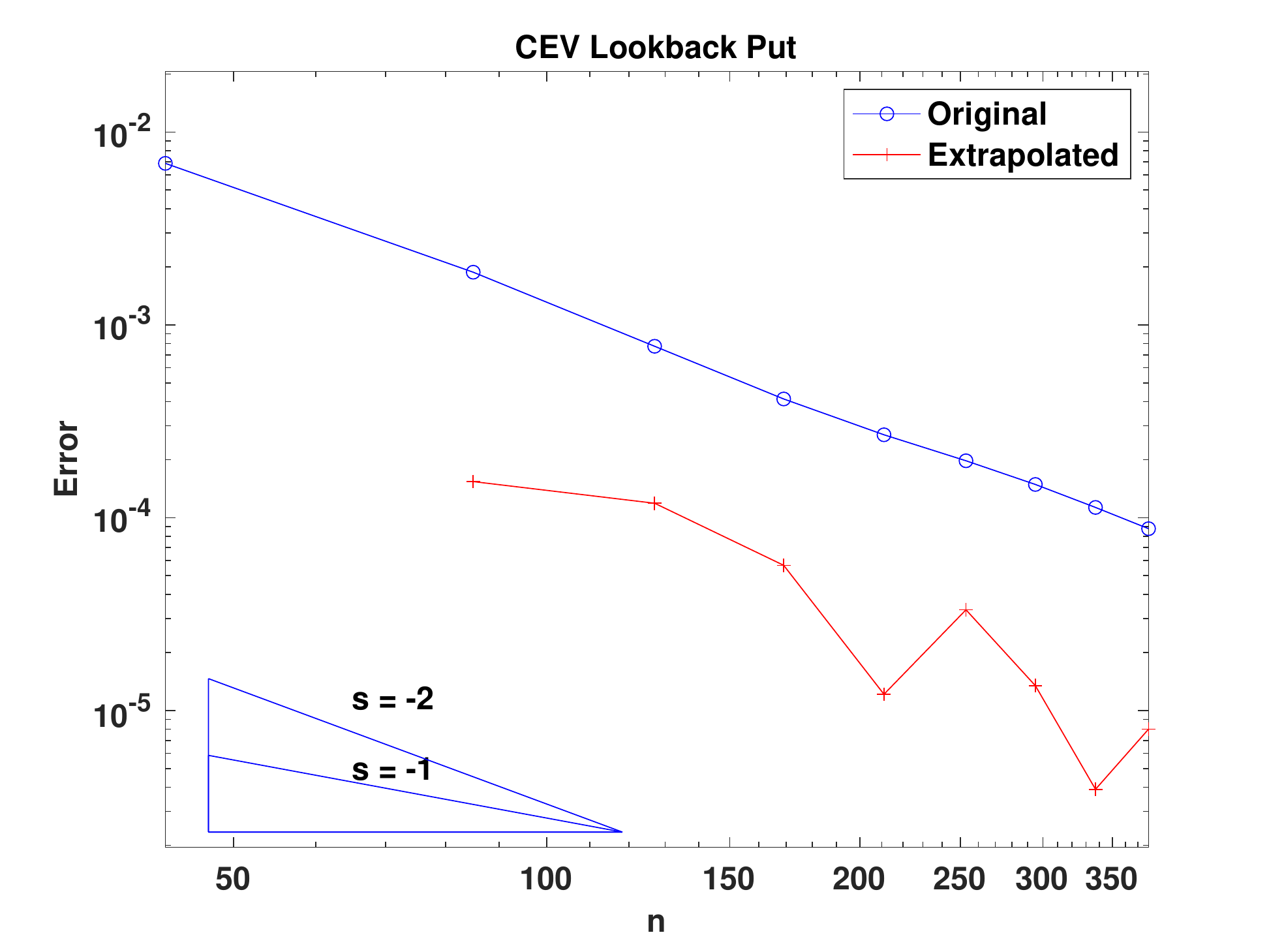}
	\caption{The convergence of a floating strike lookback put option in the Black-Scholes model, the regime-switching Black-Scholes model and the CEV model. Here ``trap-11'' means the trapezoidal rule with $11$ points is used in \eqref{eq:fl-put-approx} and ``gauss-11'' means the Gauss quadrature rule with $11$ terms is applied. ``trap-21'', ``trap-31'', ``gauss-21'' and ``gauss-31'' have similar meanings. On the right panel, the ``Original'' is the corresponding line of ``gauss-11'' on the left panel except the CEV model for which ``gauss-21'' is used. The ``Extrapolation'' line is obtained from ``Original'' by applying Richardson extrapolation based on second order convergence.}
	\label{fig:bs-rs-cev-extra}
\end{figure}

\begin{figure}
	\centering
	\includegraphics[width=0.45\textwidth]{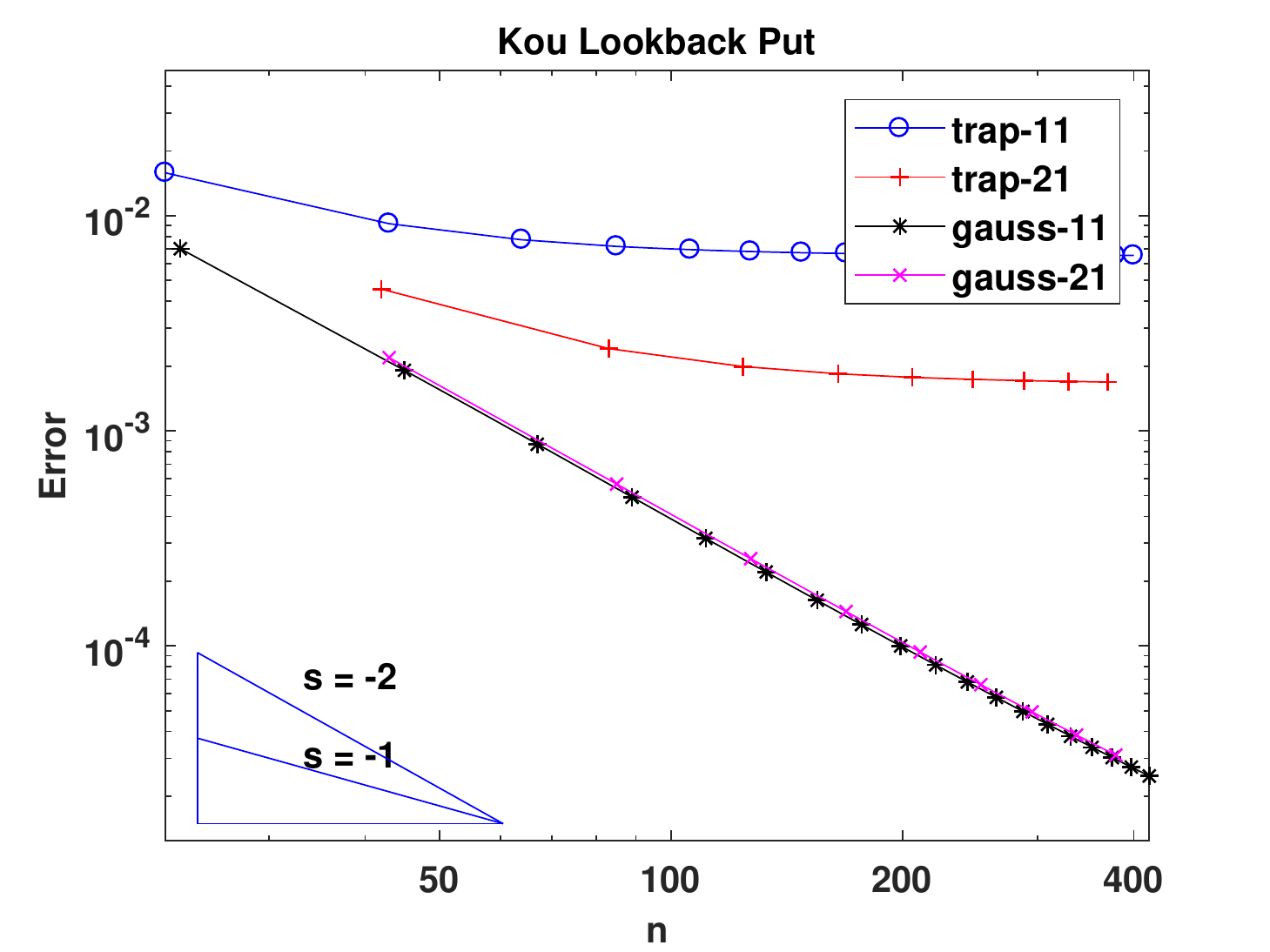}
	\includegraphics[width=0.45\textwidth]{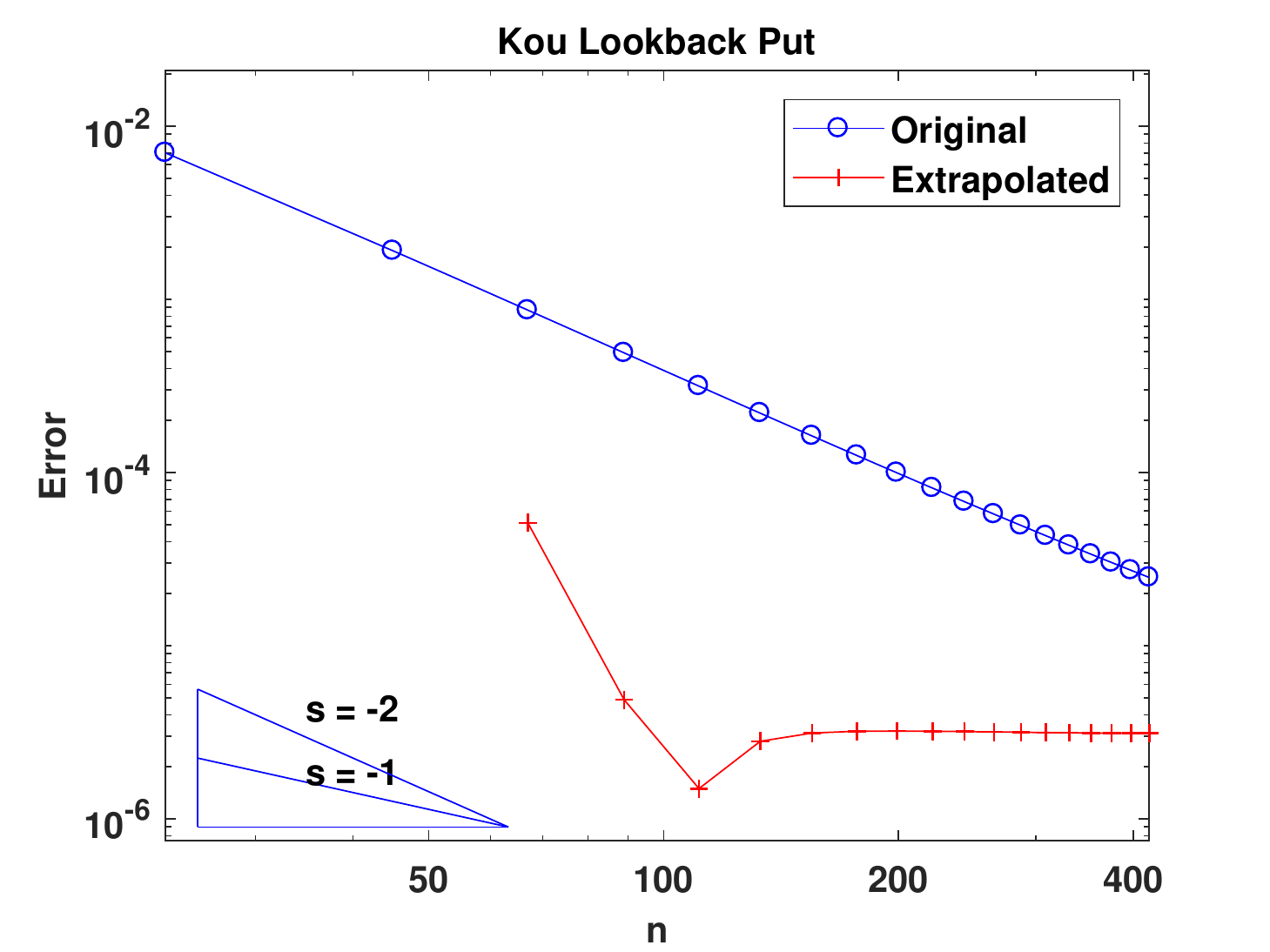}
	\includegraphics[width=0.45\textwidth]{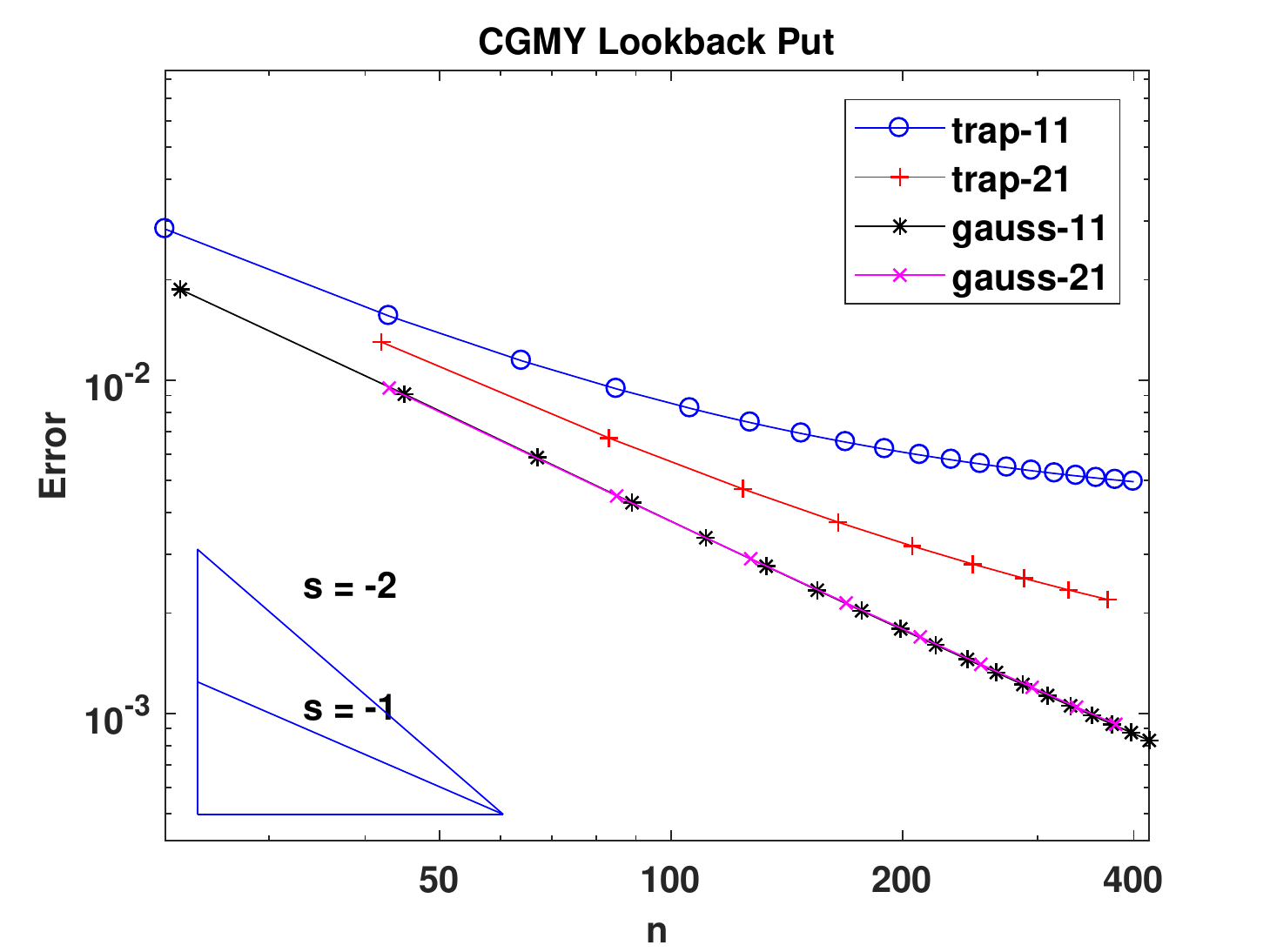}
	\includegraphics[width=0.45\textwidth]{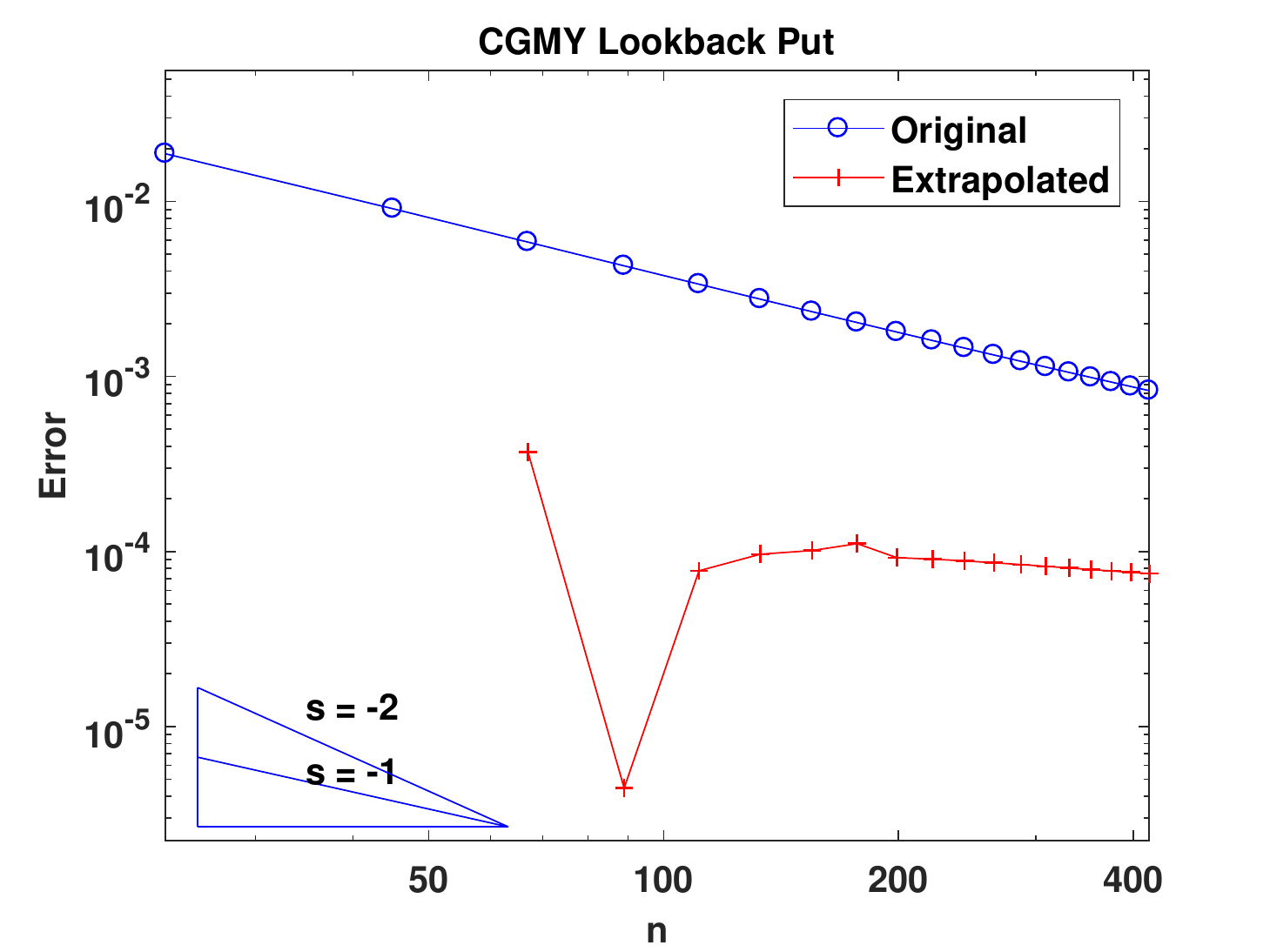}
	\caption{The convergence of a floating strike lookback put option in the Kou model and the CGMY model with extrapolation results. Here ``trap-11'', ``gauss-11'', ``trap-21'' and ``gauss-21'' have the same meaning as in Figure \ref{fig:bs-rs-cev-extra}. On the right panel, the ``Original'' is the corresponding line of ``gauss-11'' on the left panel. The ``Extrapolation'' line is obtained from ``Original'' by three-point extrapolation.}
	\label{fig:kou-cgmy-conv-extra}
\end{figure}

\begin{figure}
	\centering
	\includegraphics[width=0.7\textwidth]{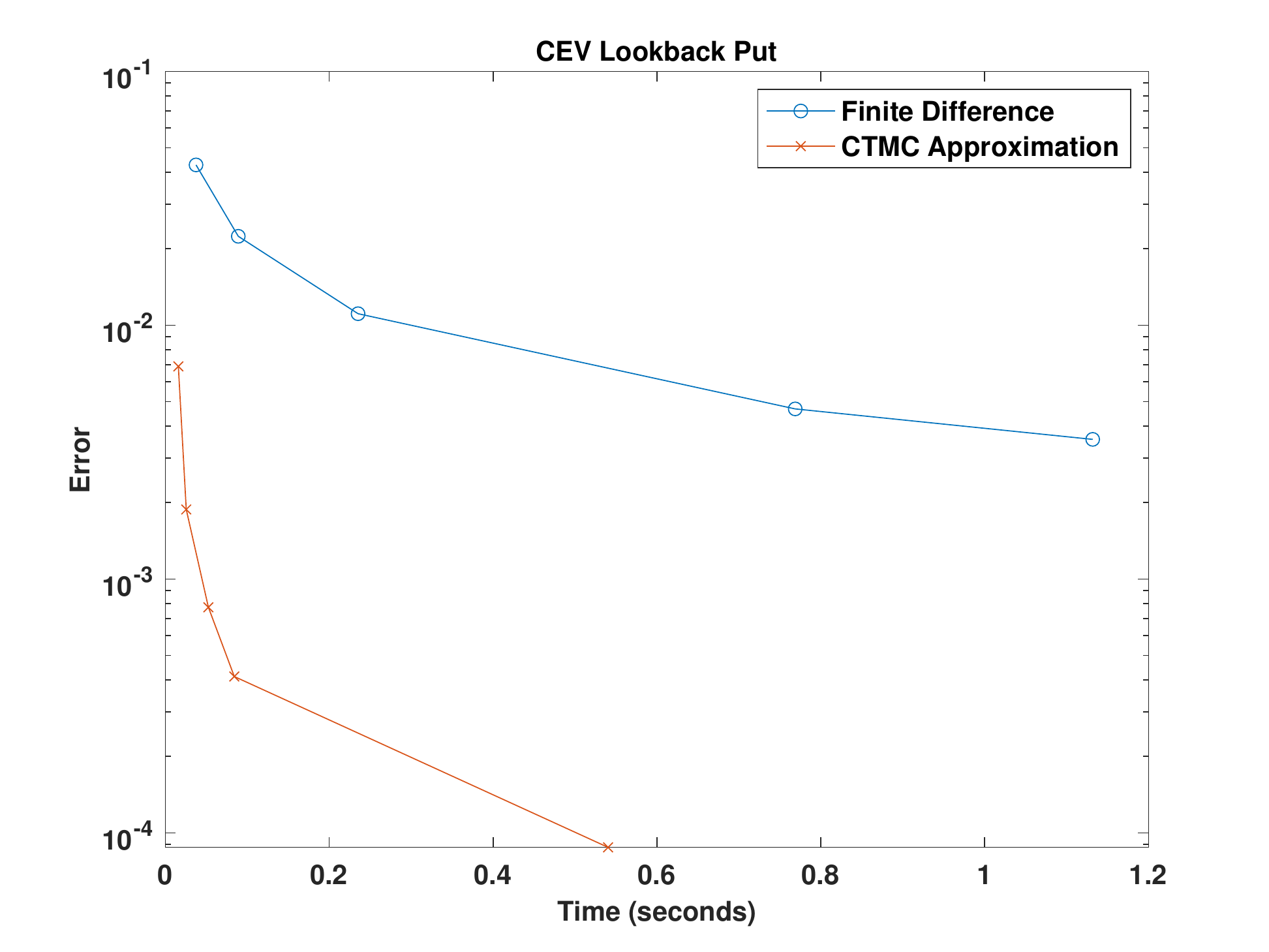}
	\caption{A comparison between a finite difference scheme and our algorithm for the floating strike lookback put option under the CEV model. The benchmark price is obtained from \cite{davydov2001pricing}. }\label{fig:lbput-comparison}
\end{figure}

\section{Conclusion}\label{sec:conclusion}
This paper develops a general approach to price lookback options by combining numerical quadrature with continuous-time Markov chain approximation. CTMC approximation has proved to be a computationally efficient tool for pricing various types of options. Our analysis reveals that directly using CTMC approximation to price lookbacks is not efficient enough. However, when it is combined with an efficient numerical quadrature such as Gauss quadrature, we can obtain a very efficient algorithm. Our method is applicable to a range of commonly used Markov models, including 1D time-homogeneous and time-inhomogeneous Markov processes, regime-switching models and stochastic local volatility models. 

The lookback options considered in the paper do not allow early exercise. The pricing of American-style floating-strike lookback puts is treated in \cite{zhang2021Amerdrawdown}, where we convert it to the pricing of American drawdown options after a measure change. The problem can be efficiently solved by combining CTMC approximation with an efficient solver for variational inequalities. The other types of American-style lookback options can be priced using similar ideas. In future research, we can consider extending our approach to deal with other types of nonstandard lookback options (see \cite{fusai2010lookback}) as well as two-asset double lookback options proposed in \cite{he1998double}.

\section*{Acknowledgement}
Lingfei Li was supported by Hong Kong Research Grant Council GRF Grant 14202117 and 14207019. Gongqiu Zhang was supported by National Natural Science Foundation of China Grant 11801423 and 12171408 and Shenzhen Basic Research Program Project JCYJ20190813165407555.

\bibliographystyle{chicagoa}
\bibliography{EEFD,lookback}

\begin{thebibliography}{}

\bibitem[\protect\citeauthoryear{Aitsahlia and Lai}{Aitsahlia and
  Lai}{1998}]{aitsahlia1998random}
Aitsahlia, F. and T.~Lai (1998).
\newblock Random walk duality and the valuation of discrete lookback options.
\newblock {\em Applied Mathematical Finance\/}~{\em 5\/}(3-4), 227--240.


\bibitem[\protect\citeauthoryear{Andricopoulos, Widdicks, Duck, and
  Newton}{Andricopoulos et~al.}{2003}]{andricopoulos2003universal}
Andricopoulos, A.~D., M.~Widdicks, P.~W. Duck, and D.~P. Newton (2003).
\newblock Universal option valuation using quadrature methods.
\newblock {\em Journal of Financial Economics\/}~{\em 67\/}(3), 447--471.


\bibitem[\protect\citeauthoryear{Andricopoulos, Widdicks, Newton, and
  Duck}{Andricopoulos et~al.}{2007}]{andricopoulos2007extending}
Andricopoulos, A.~D., M.~Widdicks, D.~P. Newton, and P.~W. Duck (2007).
\newblock Extending quadrature methods to value multi-asset and complex path
  dependent options.
\newblock {\em Journal of Financial Economics\/}~{\em 83\/}(2), 471--499.


\bibitem[\protect\citeauthoryear{Atkinson and Fusai}{Atkinson and
  Fusai}{2007}]{atkinson2007discrete}
Atkinson, C. and G.~Fusai (2007).
\newblock Discrete extrema of {B}rownian motion and pricing of exotic options.
\newblock {\em Journal of Computational Finance\/}~{\em 10\/}(3), 1.


\bibitem[\protect\citeauthoryear{Babbs}{Babbs}{2000}]{babbs2000binomial}
Babbs, S. (2000).
\newblock Binomial valuation of lookback options.
\newblock {\em Journal of Economic Dynamics and Control\/}~{\em 24\/}(11-12),
  1499--1525.


\bibitem[\protect\citeauthoryear{Boyarchenko and Levendorski\v{i}}{Boyarchenko
  and Levendorski\v{i}}{2013}]{boyarchenko2013efficient}
Boyarchenko, S. and S.~Levendorski\v{i} (2013).
\newblock Efficient {L}aplace inversion, {W}iener-{H}opf factorization and
  pricing lookbacks.
\newblock {\em International Journal of Theoretical and Applied Finance\/}~{\em
  16\/}(03), 1--40.


\bibitem[\protect\citeauthoryear{Boyle and Tian}{Boyle and
  Tian}{1999}]{boyle1999pricing}
Boyle, P. and W.~Tian (1999).
\newblock Pricing lookback and barrier options under the {CEV} process.
\newblock {\em Journal of Financial and Quantitative Analysis\/}~{\em 34\/}(2),
  241--264.


\bibitem[\protect\citeauthoryear{Broadie, Glasserman, and Kou}{Broadie
  et~al.}{1999}]{broadie1999connecting}
Broadie, M., P.~Glasserman, and S.~Kou (1999).
\newblock Connecting discrete and continuous path-dependent options.
\newblock {\em Finance and Stochastics\/}~{\em 3\/}(1), 55--82.


\bibitem[\protect\citeauthoryear{Broadie and Yamamoto}{Broadie and
  Yamamoto}{2005}]{broadie2005double}
Broadie, M. and Y.~Yamamoto (2005).
\newblock A double-exponential fast {G}auss transform algorithm for pricing
  discrete path-dependent options.
\newblock {\em Operations Research\/}~{\em 53\/}(5), 764--779.


\bibitem[\protect\citeauthoryear{Cai, Kou, and Song}{Cai
  et~al.}{2019}]{SongCaiKouRS}
Cai, N., S.~Kou, and Y.~Song (2019).
\newblock A unified framework for computing regime-switching models.
\newblock {\em Available at SSRN 3310365\/}.


\bibitem[\protect\citeauthoryear{Cai, Song, and Kou}{Cai
  et~al.}{2015}]{cai2015general}
Cai, N., Y.~Song, and S.~Kou (2015).
\newblock A general framework for pricing {A}sian options under {M}arkov
  processes.
\newblock {\em Operations Research\/}~{\em 63\/}(3), 540--554.


\bibitem[\protect\citeauthoryear{Carr, Geman, Madan, and Yor}{Carr
  et~al.}{2002}]{carr2002fine}
Carr, P., H.~Geman, D.~B. Madan, and M.~Yor (2002).
\newblock The fine structure of asset returns: an empirical investigation.
\newblock {\em The Journal of Business\/}~{\em 75\/}(2), 305--332.


\bibitem[\protect\citeauthoryear{Cheuk and Vorst}{Cheuk and
  Vorst}{1997}]{cheuk1997currency}
Cheuk, T.~H. and T.~Vorst (1997).
\newblock Currency lookback options and observation frequency: a binomial
  approach.
\newblock {\em Journal of International Money and Finance: theoretical and
  empirical research in international economics and finance\/}~{\em 16},
  173--187.


\bibitem[\protect\citeauthoryear{Conze}{Conze}{1991}]{conze1991path}
Conze, A. (1991).
\newblock Path dependent options: The case of lookback options.
\newblock {\em The Journal of Finance\/}~{\em 46\/}(5), 1893--1907.


\bibitem[\protect\citeauthoryear{Cui, Kirkby, and Nguyen}{Cui
  et~al.}{2018}]{cui2018general}
Cui, Z., J.~L. Kirkby, and D.~Nguyen (2018).
\newblock A general valuation framework for {SABR} and stochastic local
  volatility models.
\newblock {\em SIAM Journal on Financial Mathematics\/}~{\em 9\/}(2), 520--563.


\bibitem[\protect\citeauthoryear{Cui, Lee, and Liu}{Cui
  et~al.}{2018}]{cui2018single}
Cui, Z., C.~Lee, and Y.~Liu (2018).
\newblock Single-transform formulas for pricing {A}sian options in a general
  approximation framework under {M}arkov processes.
\newblock {\em European Journal of Operational Research\/}~{\em 266\/}(3),
  1134--1139.


\bibitem[\protect\citeauthoryear{Cui and Taylor}{Cui and
  Taylor}{2021}]{cui2021pricing}
Cui, Z. and S.~Taylor (2021).
\newblock Pricing discretely monitored barrier options under {M}arkov processes
  through {M}arkov chain approximation.
\newblock {\em The Journal of Derivatives\/}~{\em 28\/}(3), 8--33.


\bibitem[\protect\citeauthoryear{Dai}{Dai}{2000}]{dai2000modified}
Dai, M. (2000).
\newblock A modified binomial tree method for currency lookback options.
\newblock {\em Acta Mathematica Sinica\/}~{\em 16\/}(3), 445--454.


\bibitem[\protect\citeauthoryear{Davydov and Linetsky}{Davydov and
  Linetsky}{2001}]{davydov2001pricing}
Davydov, D. and V.~Linetsky (2001).
\newblock Pricing and hedging path-dependent options under the {CEV} process.
\newblock {\em Management Science\/}~{\em 47\/}(7), 949--965.


\bibitem[\protect\citeauthoryear{Eriksson and Pistorius}{Eriksson and
  Pistorius}{2015}]{eriksson2015american}
Eriksson, B. and M.~R. Pistorius (2015).
\newblock {A}merican option valuation under continuous-time {M}arkov chains.
\newblock {\em Advances in Applied Probability\/}~{\em 47\/}(2), 378--401.


\bibitem[\protect\citeauthoryear{Feng and Linetsky}{Feng and
  Linetsky}{2009}]{feng2009computing}
Feng, L. and V.~Linetsky (2009).
\newblock Computing exponential moments of the discrete maximum of a {L}{\'e}vy
  process and lookback options.
\newblock {\em Finance and Stochastics\/}~{\em 13\/}(4), 501--529.


\bibitem[\protect\citeauthoryear{Forsyth, Vetzal, and Zvan}{Forsyth
  et~al.}{1999}]{forsyth1999finite}
Forsyth, P., K.~Vetzal, and R.~Zvan (1999).
\newblock A finite element approach to the pricing of discrete lookbacks with
  stochastic volatility.
\newblock {\em Applied Mathematical Finance\/}~{\em 6\/}(2), 87--106.


\bibitem[\protect\citeauthoryear{Fusai}{Fusai}{2010}]{fusai2010lookback}
Fusai, G. (2010).
\newblock Lookback options.
\newblock {\em Encyclopedia of Quantitative Finance\/}.


\bibitem[\protect\citeauthoryear{Fusai, Germano, and Marazzina}{Fusai
  et~al.}{2016}]{fusai2016}
Fusai, G., G.~Germano, and D.~Marazzina (2016).
\newblock Spitzer identity, wiener-hopf factorization and pricing of discretely
  monitored exotic options.
\newblock {\em European Journal of Operational Research\/}~{\em 251\/}(1),
  124--134.


\bibitem[\protect\citeauthoryear{Fusai and Recchioni}{Fusai and
  Recchioni}{2007}]{FusaiRecchioni}
Fusai, G. and M.~C. Recchioni (2007).
\newblock Analysis of quadrature methods for pricing discrete barrier options.
\newblock {\em Journal of Economic Dynamics and Control\/}~{\em 31\/}(3),
  826--860.


\bibitem[\protect\citeauthoryear{Fusai and Roncoroni}{Fusai and
  Roncoroni}{2008}]{fusai2008implementing}
Fusai, G. and A.~Roncoroni (2008).
\newblock {\em Implementing Models in Quantitative Finance: Methods and Cases}.
\newblock Springer.


\bibitem[\protect\citeauthoryear{Glasserman}{Glasserman}{2013}]{glasserman2013monte}
Glasserman, P. (2013).
\newblock {\em Monte Carlo Methods in Financial Engineering}.
\newblock Springer.


\bibitem[\protect\citeauthoryear{Goldman, Sosin, and Gatto}{Goldman
  et~al.}{1979}]{goldman1979path}
Goldman, M.~B., H.~B. Sosin, and M.~A. Gatto (1979).
\newblock Path dependent options: ``buy at the low, sell at the high".
\newblock {\em The Journal of Finance\/}~{\em 34\/}(5), 1111--1127.


\bibitem[\protect\citeauthoryear{Green, Fusai, and Abrahams}{Green
  et~al.}{2010}]{green2010wiener}
Green, R., G.~Fusai, and I.~D. Abrahams (2010).
\newblock The {W}iener--{H}opf technique and discretely monitored
  path-dependent option pricing.
\newblock {\em Mathematical Finance\/}~{\em 20\/}(2), 259--288.


\bibitem[\protect\citeauthoryear{He, Keirstead, and Rebholz}{He
  et~al.}{1998}]{he1998double}
He, H., W.~P. Keirstead, and J.~Rebholz (1998).
\newblock Double lookbacks.
\newblock {\em Mathematical Finance\/}~{\em 8\/}(3), 201--228.


\bibitem[\protect\citeauthoryear{Higham}{Higham}{2005}]{higham2005scaling}
Higham, N.~J. (2005).
\newblock The scaling and squaring method for the matrix exponential revisited.
\newblock {\em SIAM Journal on Matrix Analysis and Applications\/}~{\em
  26\/}(4), 1179--1193.


\bibitem[\protect\citeauthoryear{Karatzas and Shreve}{Karatzas and
  Shreve}{2012}]{karatzas2012brownian}
Karatzas, I. and S.~Shreve (2012).
\newblock {\em Brownian Motion and Stochastic Calculus}.
\newblock Springer.


\bibitem[\protect\citeauthoryear{Kou}{Kou}{2002}]{kou2002jump}
Kou, S. (2002).
\newblock A jump-diffusion model for option pricing.
\newblock {\em Management Science\/}~{\em 48\/}(8), 1086--1101.


\bibitem[\protect\citeauthoryear{Li and Zhang}{Li and Zhang}{2016}]{li2016fd}
Li, L. and G.~Zhang (2016).
\newblock Option pricing in some non-{L}\'{e}vy jump models.
\newblock {\em SIAM Journal on Scientific Computing\/}~{\em 38\/}(4),
  B539--B569.


\bibitem[\protect\citeauthoryear{Li and Zhang}{Li and
  Zhang}{2018}]{li2018error}
Li, L. and G.~Zhang (2018).
\newblock Error analysis of finite difference and {M}arkov chain approximations
  for option pricing.
\newblock {\em Mathematical Finance\/}~{\em 28\/}(3), 877--919.


\bibitem[\protect\citeauthoryear{Linetsky}{Linetsky}{2004}]{linetsky2004lookback}
Linetsky, V. (2004).
\newblock Lookback options and diffusion hitting times: A spectral expansion
  approach.
\newblock {\em Finance and Stochastics\/}~{\em 8\/}(3), 373--398.


\bibitem[\protect\citeauthoryear{Meier, Li, and Zhang}{Meier
  et~al.}{2021}]{meier2021markov}
Meier, C., L.~Li, and G.~Zhang (2021).
\newblock Markov chain approximation of one-dimensional sticky diffusions.
\newblock {\em Advances in Applied Probability\/}~{\em 53\/}(2), 335--369.


\bibitem[\protect\citeauthoryear{Mijatovi{\'c} and Pistorius}{Mijatovi{\'c} and
  Pistorius}{2010}]{MPBarrierFull}
Mijatovi{\'c}, A. and M.~R. Pistorius (2010).
\newblock Continuously monitored barrier options under {M}arkov processes:
  Unabridged version with {M}atlab code.
\newblock Available at SSRN 1462822.


\bibitem[\protect\citeauthoryear{Mijatovi{\'c} and Pistorius}{Mijatovi{\'c} and
  Pistorius}{2013}]{MPBarrier}
Mijatovi{\'c}, A. and M.~R. Pistorius (2013).
\newblock Continuously monitored barrier options under {M}arkov processes.
\newblock {\em Mathematical Finance\/}~{\em 23\/}(1), 1--38.


\bibitem[\protect\citeauthoryear{Petrella and Kou}{Petrella and
  Kou}{2004}]{petrella2004numerical}
Petrella, G. and S.~Kou (2004).
\newblock Numerical pricing of discrete barrier and lookback options via
  {L}aplace transforms.
\newblock {\em Journal of Computational Finance\/}~{\em 8}, 1--38.


\bibitem[\protect\citeauthoryear{Press, Teukolsky, Vetterling, and
  Flannery}{Press et~al.}{2007}]{press2007numerical}
Press, W.~H., S.~A. Teukolsky, W.~T. Vetterling, and B.~P. Flannery (2007).
\newblock {\em Numerical Recipes: The Art of Scientific Computing\/} (3rd ed.).
\newblock Cambridge University Press.


\bibitem[\protect\citeauthoryear{Serfozo}{Serfozo}{2009}]{serfozo2009basics}
Serfozo, R. (2009).
\newblock {\em Basics of Applied Stochastic Processes}.
\newblock Springer.


\bibitem[\protect\citeauthoryear{Song, Cai, and Kou}{Song
  et~al.}{2018}]{song2018computable}
Song, Y., N.~Cai, and S.~Kou (2018).
\newblock Computable error bounds of {L}aplace inversion for pricing {A}sian
  options.
\newblock {\em INFORMS Journal on Computing\/}~{\em 30\/}(4), 634--645.


\bibitem[\protect\citeauthoryear{Tse, Li, and Ng}{Tse
  et~al.}{2001}]{tse2001pricing}
Tse, W.~M., L.~K. Li, and K.~W. Ng (2001).
\newblock Pricing discrete barrier and hindsight options with the tridiagonal
  probability algorithm.
\newblock {\em Management Science\/}~{\em 47\/}(3), 383--393.


\bibitem[\protect\citeauthoryear{Zhang and Li}{Zhang and
  Li}{2019}]{zhang2019analysis}
Zhang, G. and L.~Li (2019).
\newblock Analysis of {M}arkov chain approximation for option pricing and
  hedging: Grid design and convergence behavior.
\newblock {\em Operations Research\/}~{\em 67\/}(2), 407--427.


\bibitem[\protect\citeauthoryear{Zhang and Li}{Zhang and
  Li}{2021a}]{zhang2021nonsmooth}
Zhang, G. and L.~Li (2021a).
\newblock Analysis of {M}arkov chain approximation for diffusion models with
  non-smooth coefficients.
\newblock {\em Available at SSRN 3387751\/}.


\bibitem[\protect\citeauthoryear{Zhang and Li}{Zhang and
  Li}{2021b}]{zhang2021Parisian}
Zhang, G. and L.~Li (2021b).
\newblock A general approach for {P}arisian stopping times under {M}arkov
  processes.
\newblock {\em arXiv preprint arXiv:2107.06605\/}.


\bibitem[\protect\citeauthoryear{Zhang and Li}{Zhang and
  Li}{2021c}]{zhang2021drawdown}
Zhang, G. and L.~Li (2021c).
\newblock A general method for analysis and valuation of drawdown risk under
  {M}arkov models.
\newblock {\em Available at SSRN 3817591\/}.


\bibitem[\protect\citeauthoryear{Zhang, Li, and Zhang}{Zhang
  et~al.}{2021}]{zhang2021Amerdrawdown}
Zhang, X., L.~Li, and G.~Zhang (2021).
\newblock Pricing {A}merican drawdown options under {M}arkov models.
\newblock {\em European Journal of Operational Research\/}~{\em 293\/}(3),
  1188--1205.


\end{thebibliography}

\end{document}